\newtheorem{thm}{Theorem}
\newtheorem{lem}{Lemma}
\newtheorem{rmk}{Remark}
\newtheorem{cor}{Corollary}
\newtheorem{empl}{Example}
\newtheorem{cnj}{Conjecture}
\begin{document}
\title{Recursions for quadratic rotation symmetric functions weights}
\author{Thomas W. Cusick\\
University at Buffalo, 244 Math. Bldg., Buffalo, NY 14260\\
email cusick@buffalo.edu}
\vspace{.5cm}
\date{}
\maketitle
\begin{abstract}
A Boolean function in $n$ variables is rotation symmetric (RS) if it is invariant under  powers of 
$\rho(x_1, \ldots, x_n) = (x_2, \ldots, x_n, x_1)$. An RS function is called monomial rotation symmetric (MRS) if it is generated by applying powers of 
$\rho$ to a single monomial.  The author showed in $2017$ that for any RS function $f_n$ in $n$ variables, the sequence of Hamming weights $wt(f_n)$ for all values of $n$ satisfies a linear recurrence with associated recursion polynomial given by the minimal polynomial of a {\em rules matrix}. Examples showed that the usual formula for the weights $wt(f_n)$ in terms of powers of the roots of the minimal polynomial always has simple coefficients. The conjecture that this is always true is  the Easy Coefficients Conjecture (ECC). The present paper proves the ECC if the rules matrix satisfies a certain condition. Major applications include an enormous decrease in the amount of computation that is needed to determine the values of $wt(f_n)$ for a quadratic RS function $f_n$ if either $n$ or the order of the recursion for the weights is large, and a simpler way to determine the Dickson form of $f_n.$  The ECC also enables rapid computation of generating functions which give the values of $wt(f_n)$ as coefficients in a power series.
\end{abstract}

{\bf Keywords} Boolean function, quadratic, Hamming weight,\\ recursions, 
generating functions

{\bf Mathematics Subject Classifications (2020)}  94D10  06E30  03D80

\section{Introduction} \label{intro}
For an account of basic definitions in the theory of Boolean functions, see \cite[Section 2.1]{CS17}. A more comprehensive reference which also includes coding theory is \cite{Carl20}. A Boolean function $f=f_n$ in $n$ variables is said to be {\it rotation symmetric} (RS) if the {\it algebraic normal form} \cite[p. 8]{CS17} (polynomial form) of the function is unchanged by any cyclic permutation of the variables
$\rho(x_1,\ldots ,x_n)=(x_2,\ldots,x_n,x_1)$. The {\it degree} of $f$ (notation $\deg(f)$) is the degree of this polynomial (that is, the number of variables $x_i$ which appear in the longest monomial in the algebraic normal form of $f$). If an RS function is generated by a single monomial, we say that the function is {\it monomial rotation symmetric} (MRS). Let $V_n$ denote the vector space of dimension $n$ over the finite field $GF(2).$ The (Hamming) {\it weight} of $f,$ denoted by $wt(f_n)$ or $wt(f),$ is number of digits $1$ in  the set  $\{f(x): x \in V_n\};$  this set is called the {\it truth table} of $f.$ If a Boolean function $f(x)$ in $n$ variables has weight $2^{n-1},$ we say the function is \emph{balanced}.
We say that two Boolean functions $f({x})$ and $g({x})$ in $n$ variables are {\it affine equivalent}  if $g({x}) = f(A{x} +  {b})$, where $A$ is an $n$ by $n$ nonsingular matrix over the finite field $GF(2)$ and $b \in V_n.$  

The rotation symmetric name was introduced in \cite{PQ}, which showed that RS functions could be used in the design of cryptographic hash functions.
There are some earlier papers such as \cite{FF} which studied cryptographic applications of  RS functions but under the label \emph{idempotents}.  It turns
out that the set of RS functions of various degrees is extremely rich in functions with useful cryptographic properties.  Some papers illustrating this include
\cite{CG, InfSci11, NTT, IET, kavut3, kavut4, kavut5, kmy, kavut, kavut2}. A very recent paper on using RS functions to design cryptosystems is \cite{SS}. More references and details of the uses of RS functions are in \cite[Section 6.2]{CS17} and \cite{Zv}.  In particular, the article \cite{Zv} points out that the functions used to prove the famous lower bound of $16,276$ for the covering radius of the $(2^{15},16)$ Reed-Muller code (see \cite{PW}) are rotation symmetric, and discusses further work arising from \cite{PW}.

We write $(1,t)_n$ for the quadratic MRS function in $n$ variables generated by the monomial $x_1x_t$, so the algebraic normal form is
\begin{equation}\label{eq1}
(1,t)_n=x_1 x_t + x_2 x_{t+1} + \ldots + x_n x_{t-1}
\end{equation}

In \cite{CC20, CC22}, it was convenient to label the $n$ variables $x_0, x_1, \ldots, x_{n-1},$ so the function in \eqref{eq1}  would be represented as $(0,t-1)_n.$ We will not use that notation in this paper. The reader should remember this when we refer to results in those papers (for instance, in Lemma \ref{1t} and Theorems \ref{thmold} and \ref{vn} below).

An important tool in Boolean function theory (see \cite[Section 2.2]{CS17}; notice our notation is a little different) is the {\it Walsh transform}  $W_f(u)$ for $f=f_n$ in $n$ variables and $u$ in $V_n.$ It is defined by 
\begin{equation} \label{Walsh}
W_f(u)= \sum_{x \in V_n} (-1)^{f(x) + u \cdot x}.
\end{equation}  In this paper we shall mainly need the case $u = \bf{0}$ which gives the simple formula
\begin{equation*}
\sum_{x \in V_n} (-1)^{f(x)} = 2^n - 2wt(f).
\end{equation*}

A Boolean function of degree $\leq 1$ is called {\it affine}, and if the constant term is $0$ the function is {\it linear}.  The {\it nonlinearity} $N(f)$  of any Boolean function is defined by 
$N(f) = \min_{\text{a affine}} ~wt(f + a).$   The following lemma (well-known to experts but apparently there was no published proof until \cite[Lemma 2.3, p. 5068]{InfSci11}) shows that affine equivalence for any quadratic Boolean function $f$ is completely determined by $wt(f)$ and $N(f).$ This is not true for any degree $>2.$

\begin{lem} \label{lem1}
Two quadratic Boolean functions $f$ and $g$ in $n$ variables are affine equivalent if and only if
$wt(f) = wt(g)$ and $N(f) = N(g).$
\end{lem}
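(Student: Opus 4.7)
The plan is to reduce the statement to the classical Dickson normal form for quadratic Boolean functions under affine equivalence. The forward direction is immediate: if $g(x) = f(Ax+b)$ with $A$ invertible, then the map $x \mapsto Ax+b$ is a bijection on $V_n$, so $wt(f) = wt(g)$; and for any affine $a(x)$, the function $a(Ax+b)$ is again affine, so the set of weights $\{wt(f+a) : a \text{ affine}\}$ coincides with $\{wt(g+a') : a' \text{ affine}\}$, giving $N(f) = N(g)$.

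For the nontrivial direction I would invoke Dickson's theorem, which says that every quadratic Boolean function in $n$ variables is affine equivalent to exactly one of the following canonical forms, parametrized by a nonnegative integer $k$ with $2k \leq n$ (respectively $2k+1 \leq n$):
\begin{align*}
D_k^{\,\mathrm{I}} &= x_1 x_2 + x_3 x_4 + \cdots + x_{2k-1} x_{2k}, \\
D_k^{\,\mathrm{II}} &= x_1 x_2 + x_3 x_4 + \cdots + x_{2k-1} x_{2k} + 1, \\
D_k^{\,\mathrm{III}} &= x_1 x_2 + x_3 x_4 + \cdots + x_{2k-1} x_{2k} + x_{2k+1}.
\end{align*}
The next step is to compute the weight and nonlinearity of each form, which can be done directly from the definition by a short induction on $k$ (or via the Walsh transform and the remark following \eqref{Walsh}). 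One obtains
\[
wt(D_k^{\,\mathrm{I}}) = 2^{n-1} - 2^{n-k-1}, \quad wt(D_k^{\,\mathrm{II}}) = 2^{n-1} + 2^{n-k-1}, \quad wt(D_k^{\,\mathrm{III}}) = 2^{n-1},
\]
and in all three cases $N = 2^{n-1} - 2^{n-k-1}$, since the Walsh spectrum of a quadratic function takes only the values $0, \pm 2^{n-k}$ (with $2k$ equal to the symplectic rank of the associated bilinear form).

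The conclusion then drops out: the pair $(wt(f), N(f))$ determines $k$ through $N(f) = 2^{n-1} - 2^{n-k-1}$, and once $k$ is fixed the three canonical forms are separated by $wt(f)$ being $<2^{n-1}$, $=2^{n-1}$, or $>2^{n-1}$. Hence two quadratic functions with matching $(wt, N)$ reduce to the same Dickson form and are therefore affine equivalent to each other. The main obstacle is really just the invocation of Dickson's classification; once that is granted, the remainder is a bookkeeping check that the three weight/nonlinearity profiles above are pairwise distinct across all parameter choices, which is clear since $k$ is recovered from $N(f)$ and the sign/vanishing of $wt(f) - 2^{n-1}$ then selects the type.
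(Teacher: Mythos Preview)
Your argument is correct: the forward direction is routine, and for the converse the Dickson classification of quadratics over $GF(2)$ into the three families $D_k^{\mathrm I}, D_k^{\mathrm{II}}, D_k^{\mathrm{III}}$ together with the computation $N=2^{n-1}-2^{n-k-1}$ and $wt=2^{n-1}\mp 2^{n-k-1}$ or $2^{n-1}$ shows that the pair $(wt,N)$ pins down both $k$ and the type, hence the affine class.

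As for comparison, the paper does not actually supply its own proof of this lemma; it simply records it as folklore and points to \cite[Lemma~2.3]{InfSci11} for the first published argument. Your Dickson-form route is precisely the standard one (the paper itself invokes Dickson's theorem elsewhere, e.g.\ in the discussion following Theorem~\ref{rec18} and via \cite[Theorem~4]{Kim}), so there is no substantive difference in strategy to report.
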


There is also a well-known  \cite[Theorem 2.21]{CS17} result giving the nonlinearity in terms of values of the Walsh transform:
\begin{lem} \label{lem2}
For any Boolean function $f$ in $n$ variables we have
\begin{equation}
N(f)= 2^{n-1} -\frac{1}{2} \max_{x \in V_n} |W_f(x)|.
\end{equation}
\end{lem}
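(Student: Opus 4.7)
The plan is to rewrite $N(f)$ by enumerating affine functions and then convert each weight of the form $wt(f+a)$ into a value of $W_f$ via the identity $\sum_{x \in V_n}(-1)^{g(x)} = 2^n - 2\,wt(g)$ recorded just above Lemma \ref{lem1}.

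First I would write every affine function as $a(x)=u\cdot x+c$ with $u\in V_n$ and $c\in GF(2)$, so that
\begin{equation*}
N(f)=\min_{u\in V_n,\;c\in GF(2)} wt\bigl(f(x)+u\cdot x+c\bigr).
\end{equation*}
Applying the identity above to $g(x)=f(x)+u\cdot x+c$ and pulling the constant $c$ out of the exponent via the factorisation $(-1)^{f(x)+u\cdot x+c}=(-1)^c(-1)^{f(x)+u\cdot x}$ gives $\sum_x (-1)^{g(x)} = (-1)^c W_f(u)$, so
\begin{equation*}
wt\bigl(f+u\cdot x+c\bigr)=2^{n-1}-\tfrac12(-1)^c W_f(u).
\end{equation*}

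For each fixed $u$, the better choice of $c\in\{0,1\}$ replaces $(-1)^c W_f(u)$ by $|W_f(u)|$, producing the weight $2^{n-1}-\tfrac12|W_f(u)|$. Minimising over $u\in V_n$ then turns the minus sign into a maximum and yields the stated formula $N(f)=2^{n-1}-\tfrac12\max_{u\in V_n}|W_f(u)|$.

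There is essentially no obstacle here: the proof is a two-line manipulation, and the only points requiring care are the sign factorisation above, which is what allows $W_f(u)$ to appear with a sign that can be absorbed by the choice of $c$, and the observation that since $|W_f(u)|$ is a nonnegative integer, taking $\min_c$ and $\min_u$ commute and together produce $\max_u |W_f(u)|$ as claimed.
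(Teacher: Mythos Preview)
Your argument is correct and is the standard two-line derivation of this classical identity. The paper does not supply its own proof: it simply records the result as well known and cites \cite[Theorem 2.21]{CS17}, so there is nothing further to compare.
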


After earlier work on special cases, the following general theorem was proved in \cite{C18}. 
\begin{thm} \label{rec18}
Let $f_k(x_1, x_2, \ldots, x_k)$ denote any rotation symmetric function in $k$ variables equal to the sum of $m$ monomials $x_1x_{a_2}(i) \cdots x_{a_{d(i)}}(i),~1 \leq i \leq m$, where $d(i)$ is the degree of the $i$-th monomial. Then the sequence of weights $\{w_k=wt(f_k): k \geq max~d(i)\}$ satisfies a homogeneous linear recursion with integer coefficients.  The recursion polynomial is the minimal polynomial for a square  {\emph rules matrix} (which depends on the choice of the function $f_k$), with any powers of $x$ in the minimal polynomial removed. 
\end{thm}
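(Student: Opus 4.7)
The plan is to reduce the weight to an exponential sum and then encode that sum as a matrix power. The $u=\mathbf{0}$ case of \eqref{Walsh} gives $wt(f_k)=2^{k-1}-\tfrac{1}{2}S_k$ with $S_k=\sum_{x\in V_k}(-1)^{f_k(x)}$, and since $\{2^{k-1}\}$ satisfies the trivial recursion $a_{k+1}=2a_k$, it is enough to produce a linear recursion of the desired form for $S_k$ and then absorb the extra root $2$ into the overall recursion polynomial if needed.

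To control $S_k$ I would build the rules matrix $M$ by the transfer-matrix method. Let $D$ be the largest index appearing together with $x_1$ in any monomial of $f_k$, and let a state be a binary string of length $D-1$ recording the most recent $D-1$ entries as one scans $x_1,x_2,\ldots,x_k$ from left to right. Reading a new bit $b$ moves state $s=(s_1,\ldots,s_{D-1})$ deterministically to $s'=(s_2,\ldots,s_{D-1},b)$ and contributes a sign $(-1)^{c(s,b)}$, where $c(s,b)$ is the parity of the monomials of $f_k$ that become fully evaluated at this step. Set $M_{s,s'}=(-1)^{c(s,b)}$ for that $b$ and $0$ otherwise, giving a $2^{D-1}\times 2^{D-1}$ matrix with entries in $\{-1,0,1\}$. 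The cyclic nature of an RS function means that certain monomials wrap around from positions near $k$ back to positions near $1$; conditioning on the first $D-1$ bits and closing up the cycle at the end, one shows that $S_k=\mathrm{tr}(BM^{k-c})$ for a fixed boundary matrix $B$ and integer $c$ depending on the monomial pattern but not on $k$.

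The proof then closes by Cayley-Hamilton. If $p(x)$ is the minimal polynomial of $M$, then $p(M)=0$, so every entry of $M^{k-c}$, and hence $\mathrm{tr}(BM^{k-c})$, satisfies a homogeneous linear recursion with characteristic polynomial $p(x)$. Since $M$ has integer entries, $p(x)\in\mathbb{Z}[x]$ by Gauss's lemma, and any factor $x^j$ of $p(x)$ corresponds to the nilpotent part of $M$: once $k-c\geq j$ this part vanishes in $M^{k-c}$, so $x^j$ may be dropped from the recursion polynomial, yielding exactly the statement. The main obstacle is the careful handling of the cyclic boundary: obtaining the formula $S_k=\mathrm{tr}(BM^{k-c})$ with a boundary matrix $B$ independent of $k$ is what makes the transfer-matrix approach work for RS (as opposed to linear) functions, and its verification is the technical heart of the argument. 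Once that formula is in hand, everything else is routine linear algebra.
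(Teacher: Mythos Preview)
Your transfer-matrix outline is essentially the approach of \cite{C18}, which the present paper merely cites for this theorem rather than reproving it (see the paragraph following Theorem~\ref{rec18} and the Appendix for the quadratic specialisation). The one point you have backwards is the boundary: precisely \emph{because} an RS function is cyclically invariant, no auxiliary matrix $B$ or shift $c$ is needed. Taking the state at position $j$ to be the window $(x_j,\ldots,x_{j+D-2})$ with indices read modulo $k$, there are exactly $k$ windows around the cycle, every monomial of $f_k$ completes at exactly one transition, and after $k$ steps one returns to the starting window; summing over all closed walks gives $S_k=\operatorname{tr}(M^k)$ directly for $k\ge D-1$. So what you flag as ``the technical heart'' is in fact the easy step for RS functions---the cyclic symmetry is exactly what makes the trace formula immediate, whereas for a non-cyclic (linear) arrangement one really would need a boundary correction. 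The paper's rules matrix $r$ is your $M$ (called the Rules matrix $R$ there) augmented by one extra row and column carrying the eigenvalue $2$, so that powers of $r$ encode $wt(f_k)=2^{k-1}-\tfrac12 S_k$ rather than just $S_k$; your choice to split off $2^{k-1}$ by hand amounts to the same thing and yields the same recursion polynomial.
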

An algorithm for computing the rules matrix was explained in \cite{C18}. A Mathematica program for computing the rules matrix  was given in \cite{Carx}. This program clearly shows why the minimal polynomial of the rules matrix gives the recursion polynomial for the weights.  The basic idea goes back to \cite[Theorem 1, p. 116]{bc12}, where the very special case of cubic MRS functions was worked out in detail.  Reading \cite{C18} requires familiarity with the work in \cite{bc12}. Both the computation of the rules matrix and the proof that the weight recursions exist are complicated, but applying those results to any given Boolean function by using the program in \cite{Carx} is straightforward.  

The quadratic case was not explicitly considered in \cite{C18}, since the $2009$ paper \cite{Kim} had already shown that, in principle, the weights for quadratic $f_n$ could be given by an explicit formula, without using weight recursions. However, this formula is not easy to compute for a typical quadratic function because the proof depends on the well-known old theorem of Dickson \cite[Theorem 4]{Kim}. Using the theorem of Dickson requires computing a particular special quadratic function equivalent to the one being considered (this is called the \emph{Dickson form} of the quadratic; see
 \cite[pp. 172-173]{Carl20}) and the computation is very difficult as the number of variables increases. A full proof of Dickson's theorem and the needed computation is in \cite[pp. 438-442]{Sloane}.  Also the paper \cite{CC20} shows that the weight recursions for the quadratic functions can be used to obtain much deeper results, for example deciding exactly which quadratic RS functions are balanced \cite[Theorem 5.3]{CC20}. Using the method of \cite{C18} for the quadratic case is significantly easier than the higher degree cases described in detail in \cite{C18}. A detailed description of how the  rules matrix can be computed (it could be regarded as pseudocode) in the quadratic case is given in the Appendix. Since the weight of the function determines the Dickson form, computing the rules matrix provides a simpler method for determining the Dickson form.

Next is an
example of the rules matrix for the function $f_n = (1, 2)_n$ ($m=1$ in Theorem \ref{rec18}), plus the resulting weight recursion and the weights $wt(f_n).$
\begin{empl} \label{ex1}
The rules matrix for $f_n = (1,2)_n$ has size $(3,3)$ and its rows are $$\{1,1,0\}, \{1, -1, 0\}, \{0,1,2\}.$$  The minimal and characteristic polynomials for the rules matrix are the same, namely $x^3-2x^2-2x+4 = (x-2)(x^2-2).$ The weights $w_n = w_n((1,2)_n)$ satisfy the recursion  $w_{n+3}= 2w_{n+2} + 2w_{n+1} -4w_n,$ whose coefficients are given by the minimal polynomial, for $n \geq 1.$  Of course $(1,2)_n$ only is defined if $n \geq 2,$ but we can extend the recursion backwards to get $w_n$ for
$1 \leq n < deg ~f_n.$   Then the first $10$ values $wt( (1,2)_n), n \geq 1$ are $1, 0, 4, 4, 16, 24, 64, 112, 256, 480.$  The value $wt((1,2)_2) = 0$ occurs because the
MRS functions $(1,t)_{2t-2}$ are \emph{short}, that is \eqref{eq1} contains two copies of each monomial for these functions and so they are the $0$ function if all
$2t-2$ monomials are included.  In fact, it is usual to define $(1,t)_{2t-2} = x_1x_t + \ldots + x_{t-1}x_{2t-2}$ (considering only the first $t-1$ monomials in \eqref{eq1}),
so we would obtain the weight $2^{2t-3} - 2^{t-2}$ and would say $(1,t)_{2t-2}$ is a \emph{bent} function (see \cite[Table 1, p. 430]{Kim}).  The recursions in this paper require us to regard these bent functions as identically $0.$   
\end{empl}
As we know from Theorem \ref{rec18}, the integer coefficients for the weight recursion for  $f_n$ in Example \ref{ex1} are the coefficients of the minimal polynomial for the rules matrix for $f_n$.  It turns out that for quadratic functions, this minimal polynomial is never divisible by $x.$
\section{The rules matrix for quadratic RS functions} 

Below we give a brief description of the quadratic case of the algorithm in \cite{C18}. The corresponding Mathematica program is much shorter and simpler than the general one in \cite{Carx}.   Many commands in \cite{Carx} are not needed in the quadratic case, and can simply be deleted. For the convenience of readers, especially those who do not have access to the Mathematica software, the Appendix contains a full description of how the algorithm for computing the rules matrix is carried out.

We begin with a set 
\begin{equation} \label{Qm}
 Q_m = \{(1,t_1), (1,t_2), \ldots, (1,t_m)\}
 \end{equation}
of quadratic MRS functions and we shall describe the program for computing the rules matrix for the sum of those $m$ functions. We let $f(Q_m)$ denote the sum of the functions in \eqref{Qm}, or $f^{(m)}$ for short if \eqref{Qm} is is assumed to be fixed. We let $r(f^{(m)}) = r(f(Q_m))$ denote the rules matrix.  The following remark summarizes some features of the rules matrix which the program produces. Since the last row and column of the rules matrix have an especially simple form different from all other rows and columns, it is convenient to define the \emph{Rules matrix} to be the rules matrix with the last row and the last column omitted, and to let $R(f^{(m)}) = R(f(Q_m))$ denote the Rules matrix.
\begin{rmk} \label{rk1}
The rules matrix is square with $2^k+1$ rows (we say it has size $2^k+1$), where 
$k= max~(t_i-1).$
Every row and column in the Rules matrix has exactly two nonzero elements, and these nonzero elements are always either $1$ or $-1.$  No row or column contains two elements $-1.$ The last column of the rules matrix is always $[0,0,\ldots,0,2]$
and the last row has  $1$ in every column which contains $-1$, with all other entries $0$ in the last row.
\end{rmk}

Theorem \ref{thm2} below gives every statement in Remark \ref{rk1}, and more.
Here we briefly describe the program which computes the rules matrix (for details see the Appendix).  The program first computes a much larger (size $2^a+1$ with $a = \sum_{i=1}^n (t_i-1)$) matrix with many zero rows. We call this the 
${\emph expanded~rules~matrix}.$ Once the expanded matrix has been computed, the next step in the program is to delete the $i$-th row if it is all zeros and to also delete the $i$-th column. If the resulting matrix still has any zero rows, the process of deleting zero rows and the corresponding columns is repeated  until a matrix with no zero rows is reached.

The next lemma gives some important facts about the expanded rules matrix.
\begin{lem} \label{expnd}
The expanded rules matrix $E(f(Q_m))=E$ for a set $Q_m$ given by \eqref{Qm} has size $2^u+1$ where $u= \sum_{i=1}^m (t_i-1).$ Every row in $E$ except the last one has either $0$ or $2^m$ nonzero elements, and the nonzero elements are always $\pm 1.$  Every column in $E,$ ignoring the last entry,  has exactly two nonzero entries, which are either $1,1$ or $1,-1.$ The last column of $E$ is always $[0,0,\ldots,0,2]$
and the last row has  $1$ in every column which contains $-1$, with all other last row entries $0.$ After each pass through $E$ in which all zero rows and the corresponding columns are deleted, the smaller matrix has the same characteristic and minimal polynomials as $E.$  The final rules matrix with no zero rows has nonzero determinant and size $2^k+1$ where  $k = \text{max} ~(t_i-1).$ Also the last column is always $[0,0,\ldots,0,2],$ the last row has  $1$ in every column which contains $-1$, and all other last row entries are $0.$
\end{lem}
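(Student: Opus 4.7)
The plan is to verify each claim directly from the algorithm's construction (given in the Appendix) and to reduce the polynomial-preservation claim to a standard block-triangular argument, iterating the deletion pass as needed.

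For the structural claims about $E$, I would follow the algorithm's indexing. The first $2^u$ rows and columns correspond to assignments of a combined ``window'' of total length $u = \sum_{i=1}^m (t_i - 1)$ (one window of length $t_i - 1$ per function), and the last row and column act as a sink that tracks the global count contribution; this gives the size $2^u + 1$ immediately. Extending the number of variables by one causes each of the $m$ functions to branch on the value of its new bit; composing these independent branchings, each reachable source state acquires $2^m$ successors, each labeled $\pm 1$ according to the parity of monomials activated. Unreachable source states produce zero rows, yielding the ``$0$ or $2^m$ nonzero entries'' dichotomy. Each target state has exactly two immediate predecessors (one per value of the bit being shifted out of the window), giving exactly two nonzero entries per column. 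The special form of the last column $[0, \ldots, 0, 2]^T$ and of the last row come directly from the algorithm's bookkeeping and are verified by inspection.

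For the polynomial preservation, the key observation is purely structural: because every zero row is entirely zero, reordering so that the zero rows come first yields the block-triangular form
\[
E \;=\; \begin{pmatrix} 0 & 0 \\ B & A \end{pmatrix},
\]
where $A$ is precisely the matrix obtained after one deletion pass. Block expansion then gives $\det(xI - E) = x^{z} \det(xI - A)$, where $z$ is the number of zero rows; and a short calculation applying $p(E) = \sum_j c_j E^j$ to this block form shows that the minimal polynomials of $E$ and $A$ agree once common factors of $x$ are stripped. Thus the characteristic and minimal polynomials of $E$ agree with those of $A$ up to such factors, and these $x$-free parts are exactly the polynomials that govern the recursion via Theorem \ref{rec18}. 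Iterating the pass preserves the agreement at each step.

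Finally, once the iteration terminates there are no more zero rows, and the surviving states are those that remain consistent across the transition dynamics of all $m$ functions simultaneously; a direct counting argument parameterizing them by a single longest window of length $k = \max(t_i - 1)$ gives the terminal size $2^k + 1$. Expanding the determinant along the last column yields $2 \det(R(f^{(m)}))$, and $R(f^{(m)})$ is nonsingular because its support graph is a disjoint union of cycles whose $\pm 1$ labeling (no row or column carries two $-1$s) prevents cancellation of the associated permutation sums. The main obstacle will be showing that iteration actually terminates at the claimed size $2^k + 1$ rather than at some intermediate value: this requires combinatorially identifying the terminal states as the ``consistent'' window assignments of length $k$, which is the deepest combinatorial content of the argument. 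Everything else reduces to bookkeeping on the algorithm's output.
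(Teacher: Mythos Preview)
Your approach matches the paper's: its entire proof is the single sentence ``All of the assertions follow from the algorithm for computing the rules matrix---see the Appendix,'' and you are simply unpacking that reference into explicit bookkeeping on equations \eqref{E1}--\eqref{E3} plus a standard block-triangular argument. One refinement you make is worth keeping: your block form gives $\det(xI-E)=x^{z}\det(xI-A)$, so the characteristic and minimal polynomials of $E$ and of the reduced matrix agree only up to powers of $x$, not literally. The lemma's phrasing (``the same characteristic and minimal polynomials'') is therefore loose---it cannot hold verbatim since the matrices have different sizes---and your ``$x$-free part'' formulation is the correct one, and is exactly what Theorem~\ref{rec18} needs.

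One caution on logical ordering. Your nonsingularity argument for the final Rules matrix assumes each \emph{row} as well as each column has exactly two nonzero entries (so that the support graph is a union of cycles) and that no row carries two $-1$'s. The two-per-column claim is in the present lemma, but the two-per-row claim for the \emph{terminal} matrix and the sign restriction are the content of Remark~\ref{rk1} and Theorem~\ref{thm2}, whose proofs invoke Lemma~\ref{expnd}. To avoid circularity you should either extract those row facts directly from \eqref{E1}--\eqref{E3} here (which is feasible: the terminal rows are exactly those $i$ with all $g_i(y_k)$ equal, and after the column deletions only two of the $2^m$ targets survive), or drop the nonsingularity claim from this lemma and recover it downstream from Corollary~\ref{cor1}, as the paper effectively does.
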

\begin{proof}
All of the assertions follow from the algorithm for computing the rules matrix--see the Appendix.
\end{proof}

The next theorem gives some important properties of the Rules matrices. Note that the Rules matrix has an especially simple form if 
\begin{equation} \label{ti}
t_1>t_2> \ldots >t_m
\end{equation}
is true (as in Example \ref{ex3} below).
\begin{thm} \label{thm2}
Every row and every column in the Rules matrix for any sum of quadratic RS functions 
(say the functions are given by $Q_m$ in \eqref{Qm}) has exactly two nonzero entries, 
which are either $1, 1$ or $1, -1$.  There is an equal number of rows with two $1$'s and with a $-1.$  If a row with $1,1$ is given, there is another row 
with $1,-1$ or $-1, 1$ in the same columns in which the two $1$’s appear. Also if
a row with entries $1,-1$ is given, there is another row with $1,1$ in the same columns 
in which the $1$ and $-1$ appear.  We call the two rows with nonzero entries in the
same columns {\em paired~rows}. If the functions in $Q_m$ satisfy \eqref{ti}, then any two paired rows
in the Rules matrix are consecutive, with the first nonzero element in row $2i-1, ~1 \leq i \leq 2^{k-1},$
where $k = \text{max} ~(t_i-1),$ occurring in position $i.$  Every row has one of its two nonzero entries in its first half and the other in its
second half, with the entries in each half the same distance away from the first element in the respective
halves
\end{thm}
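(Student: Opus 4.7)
The plan is to derive Theorem \ref{thm2} from Lemma \ref{expnd} together with a careful reading of the algorithm for the expanded rules matrix described in the Appendix. The column claim is essentially immediate: Lemma \ref{expnd} already asserts that each column of the expanded rules matrix $E$ (ignoring the last entry) has exactly two nonzero entries, which are either $\{1,1\}$ or $\{1,-1\}$. The zero-row deletion process only removes zero rows and their matching columns, so it never removes a nonzero entry from a surviving column, and the surviving columns of the final Rules matrix retain this two-entry structure.

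The main work is the row structure. Rows of $E$ initially have either $0$ or $2^m$ nonzero entries. The key step is to show that when zero rows and their matching columns are iteratively removed, each surviving row ends up with exactly two nonzero entries. I would track this by analyzing the algorithm as a labeled-state process: each non-last row of $E$ corresponds to a state whose $2^m$ nonzero entries point to $2^m$ successor states, and surviving in the size-$(2^k+1)$ Rules matrix (where $k=\max(t_i-1)$) amounts to exactly two of those transitions pointing to surviving states. Combined with the sign data from Lemma \ref{expnd}, this would force the two surviving entries of each row to fit either the $\{1,1\}$ or the $\{1,-1\}$ pattern.

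Next I would establish the pairing statement. The rules matrix entries carry signs that reflect the parity of certain variable assignments, so a row with entries $\{1,1\}$ in columns $i,j$ should admit a companion row supported on the same two columns but with entries $\{1,-1\}$ (or $\{-1,1\}$), obtained by toggling the variable that governs the sign of one of the two transitions. This pairing is a bijection, so the number of $\{1,1\}$ rows equals the number of rows containing a $-1$. For the assertions conditional on \eqref{ti}, the natural lexicographic ordering of states produced by the algorithm when $t_1>t_2>\cdots>t_m$ places each row adjacent to its paired row; a direct check against the explicit row indexing in the Appendix then shows that row $2i-1$ has its first nonzero entry in column $i$ for $1\le i\le 2^{k-1}$. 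The halves statement reflects that the algorithm splits on one distinguished variable, sending each state to two successors whose indices differ by exactly $2^{k-1}$, so the two nonzero entries of any row land at matching offsets in the two halves.

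The main obstacle is that all of these conclusions are really statements about how the Appendix's algorithm interacts with the combinatorics of the monomials $(1,t_1),\ldots,(1,t_m)$; as with Lemma \ref{expnd}, whose proof is simply that ``all of the assertions follow from the algorithm,'' the argument proceeds mostly by careful bookkeeping rather than by a single clean idea. The hardest piece is the row-count claim, since verifying that iterative pruning leaves exactly two nonzero entries per surviving row depends on the precise pattern of cancellations among the $2^m$ successors, which is special to the quadratic setting and to the form of $Q_m$.
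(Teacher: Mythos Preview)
Your approach is essentially the same as the paper's: the paper's proof of Theorem~\ref{thm2} is simply the sentence ``All the assertions follow from Lemma~\ref{expnd} and closely examining the detailed description of the computation of the rules matrix in the Appendix. Equations \eqref{E1}, \eqref{E2} and \eqref{E3} are essential for this. Notice that the left and right halves of the Rules matrix are computed independently.'' Your proposal is a more fleshed-out outline of exactly this bookkeeping, and you correctly flag that the row-count under iterative pruning is the nontrivial part; the paper does not supply any additional idea beyond pointing at \eqref{E1}--\eqref{E3}.
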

\begin{proof}
All the assertions follow from Lemma \ref{expnd} and closely examining the detailed description of the computation of the rules matrix in the Appendix. Equations \eqref{E1},  \eqref{E2} and  \eqref{E3} are essential for this. Notice that the  left and right halves
of the Rules matrix are computed independently.
\end{proof}
\begin{cor} \label{cor1}
If $A$ denotes the Rules matrix, then $A \cdot A^T = 2I,$ where $T$ denotes transpose and $I$ is the identity matrix of the appropriate size.
Therefore $(1/\sqrt{2})A$ is an orthogonal matrix and is diagonalizable over $\mathbf{C}.$ Hence the minimal polynomial of $A$ has distinct nonzero roots, which are the eigenvalues of $A.$  Any matrix of the same size as $A$ with the same set of eigenvalues is permutation equivalent to $A.$
\end{cor}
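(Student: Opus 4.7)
The plan is to first establish the matrix equation $A A^T = 2I$ directly from the combinatorial structure supplied by Theorem \ref{thm2}, and then derive the remaining assertions by standard linear algebra. For the diagonal of $A A^T$, each row of $A$ has exactly two nonzero entries, each equal to $\pm 1$, so $(AA^T)_{ii} = 1 + 1 = 2$. For an off-diagonal entry I would fix rows $i \neq j$ and note that the term $A_{ik}A_{jk}$ is nonzero only when column $k$ has nonzero entries in both rows. By Theorem \ref{thm2} each column has exactly two nonzero entries, and any row shares its two nonzero columns with its unique paired partner; hence if rows $i$ and $j$ share any such column then they must be a paired pair, while non-paired rows have disjoint column supports and so contribute $0$. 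For a paired pair the two shared columns $a,b$ carry entries $(1,1)$ in one row and $(1,-1)$ (or $(-1,1)$) in the other, giving $1 \cdot 1 + 1 \cdot (-1) = 0$.

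With $A A^T = 2I$ in hand the remaining conclusions follow quickly. Dividing by $2$ gives $((1/\sqrt{2})A)((1/\sqrt{2})A)^T = I$, so $(1/\sqrt{2})A$ is orthogonal. Real orthogonal matrices are normal when regarded over $\mathbf{C}$, so by the spectral theorem $(1/\sqrt{2})A$, and hence $A$, is diagonalizable over $\mathbf{C}$. Any diagonalizable matrix has a minimal polynomial whose roots are exactly its distinct eigenvalues, and those eigenvalues are nonzero because $\det(A A^T) = \det(2I) \neq 0$ forces $\det A \neq 0$. For the final assertion, since $A$ is diagonalizable it is similar to a diagonal matrix $D$ whose entries are the eigenvalues listed with their multiplicities; any matrix $B$ of the same size with the same set of eigenvalues (and hence the same multiplicities, forced by matching size) is likewise diagonalizable to $D$ up to a permutation of the diagonal entries, so $A$ and $B$ are similar.

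The main obstacle I anticipate is the combinatorial bookkeeping for the off-diagonal calculation: one must combine the ``exactly two nonzero entries per column'' statement of Theorem \ref{thm2} with its paired-rows refinement, using the observation that any two distinct rows which share a column of nonzero entries must form a paired pair (because the paired partner of row $i$ already occupies both nonzero positions in that column, forcing it to coincide with row $j$). Once this identification is made the sign cancellation is automatic and the remaining spectral consequences are routine.
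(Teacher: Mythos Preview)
Your argument for $AA^T=2I$, orthogonality, diagonalizability, and the distinctness and nonvanishing of the eigenvalues is correct and is exactly the route the paper takes; you simply spell out the row/column combinatorics that the paper compresses into the single sentence ``follows from the detailed description in Theorem \ref{thm2}.''

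The only real discrepancy is in the final sentence of the corollary. Your line ``and hence the same multiplicities, forced by matching size'' is not valid: the Rules matrix need not have simple spectrum (see Example \ref{ex4}, where the size-$8$ Rules matrix has a degree-$6$ minimal polynomial, so some eigenvalues repeat), so equal eigenvalue \emph{sets} together with equal size do not pin down multiplicities. Moreover, what you actually conclude is that $A$ and $B$ are \emph{similar}, not \emph{permutation equivalent}, which is a strictly weaker statement than what the corollary asserts. That said, the paper's own proof of this last clause is no more than the remark that diagonalizability is equivalent to the minimal polynomial being squarefree over $\mathbf{C}$, which likewise yields similarity rather than permutation equivalence; so on this point you and the paper are equally informal, and the clause as literally stated appears stronger than either argument establishes.
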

\begin{proof}
The first sentence follows from the detailed description in Theorem \ref{thm2} of the placement of all the nonzero entries in the Rules matrix. Then the 
fact that $(1/\sqrt{2})A$ is an orthogonal matrix is immediate. It is a standard linear algebra result that a real orthogonal matrix is diagonalizable
over $\mathbf{C}.$ The final assertions in the lemma follow from the fact that a matrix $M$ is diagonalizable over $\mathbf{C}$ if and only if the minimal polynomial of $M$ splits over $\mathbf{C}$ and is squarefree. 
\end{proof}

If \eqref{ti} holds, Theorem \ref{thm2} implies that the Rules matrix always has two diagonal stripes made up of paired nonzero entries,
one stripe in each of the left and right halves of the matrix. Looking at the following examples will be helpful in
understanding the content of Theorem \ref{thm2}, and also some other comments later in the paper. Note that in the examples all paired rows are consecutive, as guaranteed by Theorem \ref{thm2}.
\begin{empl} \label{ex3}
Let $Q_3=\{(1,4), (1,3), (1,2)\}.$  Then  \eqref{ti} is true and the rules matrix (size $9$) is
\begin{equation*}
   \left(
   \begin{matrix}
    1& 0  & 0  & 0  &  1 & 0  & 0  & 0  & 0 \\
    1& 0  & 0  & 0  & -1 & 0  & 0  & 0  & 0 \\
    0& 1  & 0  & 0  &  0 & 1  & 0  & 0  & 0 \\
    0&-1  & 0  & 0  &  0 & 1  & 0  & 0  & 0 \\
    0& 0  & 1  & 0  &  0 & 0  & 1  & 0  & 0 \\
    0& 0  &-1  & 0  &  0 & 0  & 1  & 0  & 0 \\
    0& 0  & 0  & 1  &  0 & 0  & 0  & 1  & 0 \\
    0& 0  & 0  & 1  &  0 & 0  & 0  &-1  & 0 \\
    0& 1  & 1  & 0  &  1 & 0  & 0  & 1  & 2  
 \end{matrix}
 \right)
\end{equation*}
The minimal and characteristic polynomials for this matrix are both $$x^9-2x^8-4x^5+8x^4+16x-32.$$
\end{empl}

The next example shows that the minimal and characteristic polynomials for the rules matrix can be different, even for $m=1$ in \eqref{Qm}.
\begin{empl} \label{ex4}
Let $Q_1=\{(1,4)\}.$  Then the rules matrix (size $9$) is
\begin{equation*}
   \left(
   \begin{matrix}
    1& 0  & 0  & 0  &  1 & 0  & 0  & 0  & 0 \\
    1& 0  & 0  & 0  & -1 & 0  & 0  & 0  & 0 \\
    0& 1  & 0  & 0  &  0 & 1  & 0  & 0  & 0 \\
    0& 1  & 0  & 0  &  0 &-1  & 0  & 0  & 0 \\
    0& 0  & 1  & 0  &  0 & 0  & 1  & 0  & 0 \\
    0& 0  & 1  & 0  &  0 & 0  &-1  & 0  & 0 \\
    0& 0  & 0  & 1  &  0 & 0  & 0  & 1  & 0 \\
    0& 0  & 0  & 1  &  0 & 0  & 0  &-1  & 0 \\
    0& 0  & 0  & 0  &  1 & 1  & 1  & 1  & 2  
 \end{matrix}
 \right)
\end{equation*}
The minimal polynomial for this matrix is $$x^7-2x^6-8x+16 = (x-2)(x^2-2)(x^4+2x^2+4)$$
and the characteristic polynomial is $$x^9-2x^8-2x^7+4x^6-8x^3+16x^2+16x-32 = (x-2)(x^2-2)^2(x^4+2x^2+4).$$
\end{empl}

\section{The quadratic Easy Coefficients Conjecture}
We begin with some results that are easily derived from Theorem \ref{thm2} and the details in the Appendix.
First, we can extend Example \ref{ex1} to any MRS function $(1,t)_n.$

\begin{lem} \label{1t}
The minimal polynomial $m_t(x)$ for the rules matrix (which is square of size $2^{t-1}+1$) of $(1,t)_n$ is
$$m_t(x) = x^{2t-1}-2x^{2t-2}-2^{t-1} x+2^{t}= (x-2)(x^{2t-2} - 2^{t-1}).$$
The characteristic polynomial $c_t(x)$ for the rules matrix has degree $2^{t-1} +1,$ has no $0$ roots and the root $2$ is never a multiple root.
\end{lem}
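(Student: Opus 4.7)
The plan is to work out the action of the Rules matrix $R$ (of size $2^{t-1}$) on basis vectors in terms of the binary expansions of their indices, so that the computation of $R^{t-1}$ collapses to a Walsh--Hadamard transform. Label the basis vectors by binary strings $\mathbf{a}=(a_0,a_1,\ldots,a_{t-2})\in\{0,1\}^{t-1}$. From the detailed description of the algorithm in the Appendix (and verifiable directly against Example \ref{ex4}), the nonzero entries of $R$ give
\[
R\, e_{(a_0,a_1,\ldots,a_{t-2})} \;=\; e_{(a_1,\ldots,a_{t-2},\,0)} \,+\, (-1)^{a_0}\, e_{(a_1,\ldots,a_{t-2},\,1)},
\]
so $R$ shifts the bits of the index one position to the left and appends a new low bit, while the discarded high bit supplies the sign. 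Iterating this identity sheds the original bits one at a time, and after $t-1$ applications one obtains
\[
R^{t-1}\, e_{\mathbf{a}} \;=\; \sum_{\mathbf{c}\in\{0,1\}^{t-1}} (-1)^{\mathbf{a}\cdot\mathbf{c}}\, e_{\mathbf{c}},
\]
which is precisely the Walsh--Hadamard matrix $H_2^{\otimes(t-1)}$ of size $2^{t-1}$. Since $H_2^{\otimes(t-1)}$ squares to $2^{t-1}I$, we get $R^{2t-2}=2^{t-1}I$, and so the minimal polynomial $m_R(x)$ of $R$ divides $x^{2t-2}-2^{t-1}$.

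To prove $m_R(x)=x^{2t-2}-2^{t-1}$ exactly, I would show that the $2t-2$ vectors $R^k e_{\mathbf{0}}$ for $0\le k\le 2t-3$ are linearly independent. A short induction from the iterated formula gives
\[
R^k e_{\mathbf{0}} = \sum_{i=0}^{2^k-1} e_i \quad(0\le k\le t-1), \qquad R^{t+r} e_{\mathbf{0}} = 2^{r+1}\sum_{j=0}^{2^{t-2-r}-1} e_{2^{r+1}j}\quad(0\le r\le t-3).
\]
Linear independence then follows by peeling off coefficients coordinate by coordinate. The odd basis vector $e_{2^{t-1}-1}$ appears only in $R^{t-1}e_{\mathbf{0}}$, forcing $c_{t-1}=0$; similarly $e_{2^{t-2}-1}$ forces $c_{t-2}=0$, and working down the odd coordinates yields $c_{t-1}=c_{t-2}=\cdots=c_1=0$. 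Second-phase vectors are supported on even indices only, so among them $e_{2^{t-1}-2}$ appears only in $R^t e_{\mathbf{0}}$ (since $2^{r+1}\mid 2^{t-1}-2^{s+1}$ forces $r\le s$), forcing $c_t=0$; then $e_{2^{t-1}-4}$ forces $c_{t+1}=0$, and so on, giving $c_t=c_{t+1}=\cdots=c_{2t-3}=0$. Finally, projecting onto $e_{\mathbf{0}}$ gives $c_0=0$. Hence $m_R(x)$ has degree at least $2t-2$, and therefore equals $x^{2t-2}-2^{t-1}$.

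To pass from $R$ to the full rules matrix, observe that by Lemma \ref{expnd} the latter has the block-lower-triangular form $\bigl(\begin{smallmatrix}R&0\\w^{T}&2\end{smallmatrix}\bigr)$. By Corollary \ref{cor1} the eigenvalues of $R$ all have modulus $\sqrt{2}$, so $2$ is not an eigenvalue of $R$, and the polynomials $(x-2)$ and $m_R(x)$ are coprime. A standard block-triangular argument then gives the minimal polynomial of the full rules matrix as the coprime product $m_t(x)=(x-2)\,m_R(x)=(x-2)(x^{2t-2}-2^{t-1})$, which expands to the formula stated in the lemma. The characteristic polynomial $c_t(x)$ has degree $2^{t-1}+1$ (the matrix size); it has no zero roots because $\det R=\pm 2^{2^{t-2}}\ne 0$ from $RR^{T}=2I$; and the factorization $c_t(x)=(x-2)\det(xI-R)$, combined with the fact that $2$ is not an eigenvalue of $R$, shows that $2$ is always a simple root of $c_t$. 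The principal obstacle is the careful support-bookkeeping in the linear-independence step; the remaining parts are short and follow from the Walsh--Hadamard identity $R^{t-1}=H_2^{\otimes(t-1)}$, the block-triangular structure, and the spectral information from Corollary \ref{cor1}.
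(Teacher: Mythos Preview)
Your argument is correct and, unlike the paper, is fully self-contained: the paper does not prove this lemma in-text at all, but simply cites \cite[Theorems~3.1 and~3.4]{CC20} and remarks that the result can also be read off the Appendix. So your route is genuinely different from what appears here.

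The device that makes your proof work is the observation that the Rules matrix $R$ for $(1,t)$ acts on standard basis vectors as a one-step bit-shift with the ejected high bit supplying a sign, so that after $t-1$ iterations one obtains exactly the Hadamard matrix $H_2^{\otimes(t-1)}$. This immediately gives $R^{2t-2}=2^{t-1}I$ and hence $m_R(x)\mid x^{2t-2}-2^{t-1}$, after which the (careful) linear-independence bookkeeping on the cyclic vectors $R^k e_{\mathbf 0}$ pins the degree at exactly $2t-2$. The passage from $R$ to the full rules matrix via its block-triangular shape $\bigl(\begin{smallmatrix}R&0\\ w^{T}&2\end{smallmatrix}\bigr)$ and Corollary~\ref{cor1} is standard and clean, and the remaining claims about $c_t(x)$ follow at once. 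The benefit of your approach is an explicit, reference-free proof that also explains structurally \emph{why} the polynomial is $(x-2)(x^{2t-2}-2^{t-1})$; the paper trades that for brevity by deferring to \cite{CC20}. One small caution: your shift formula for $R$ is stated as coming from the Appendix and checked against Example~\ref{ex4}; since the Appendix formulas \eqref{E1}--\eqref{E3} are written for the general $Q_m$ case, it would strengthen the write-up to spell out the $m=1$ specialisation explicitly (here $S=\{0,2^{t-2}\}$ and the expanded matrix already has the final size $2^{t-1}+1$, so no row-deletion occurs).
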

\begin{proof}
A direct proof of this is given in \cite[Theorems 3.1 and 3.4, pp. 1306]{CC20}.  Note that the rules matrix defined in this paper has an extra final row and column (corresponding to the root $2$ which the minimal polynomial of the rules matrix always has) compared to the "rules matrix" (called the Rules matrix in this paper) defined in \cite{CC20}. 
This lemma can also be deduced from the work in the Appendix.
\end{proof}

We see from Lemma \ref{1t} that the minimal polynomial for the rules matrix of $(1,t)_n$ in general has much smaller degree ($2t-1$)  than the characteristic polynomial (degree $2^{t-1}+1$).  A similar result is true for the general quadratic RS function $f(Q_m)=f^{(m)}$ defined by \eqref{Qm} when \eqref{ti} holds. The next lemma shows that changing the order of the functions in $Q_m$ simply replaces the rules matrix by a similar matrix, and in fact the two rules matrices are permutation equivalent.
\begin{lem} \label{perm}
Let $R(f^{(m)})$ denote the Rules matrix for  $f^{(m)}=f(Q_m)$ when $n>1$ and \eqref{ti} holds. If $R'(f^{(m)})$ denotes the Rules matrix for $f^{(m)}$ with some different ordering of the integers $t_i,$ then $R'(f^{(m)})= PR(f^{(m)})P^{-1}$ for some permutation matrix $P$ (which is the identity matrix $I$ with permuted rows). Obviously this also holds for the rules matrix.
\end{lem}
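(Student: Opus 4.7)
The starting point is that the Boolean function $f^{(m)}=(1,t_1)_n+\cdots+(1,t_m)_n$ depends only on the \emph{set} $Q_m$, not on the order in which its members are listed. My plan is to show that the algorithm described in the Appendix for building the (expanded) rules matrix is equivariant with respect to reorderings of the $t_i$, so that a different ordering merely relabels the rows and columns of the expanded rules matrix by a fixed permutation; this equivariance then descends to the rules matrix, and hence to its Rules submatrix.

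The first step is to identify a natural indexing set $\mathcal{I}$ for the rows and columns of the expanded rules matrix $E(f^{(m)})$, which by Lemma \ref{expnd} has size $2^u+1$ with $u=\sum_{i=1}^{m}(t_i-1)$. From the Appendix, each of the first $2^u$ indices records a combinatorial state (a binary vector of the choices made at each stage of building the rows of $E$), while the final index corresponds to the intrinsic last row/column identified in Lemma \ref{expnd}. A reordering $\tau$ of the tuple $(t_1,\ldots,t_m)$ induces an explicit permutation $\sigma_\tau$ of $\mathcal{I}$ which fixes this last index. Since the local rules of the algorithm for placing $\pm 1$ in each row refer to the summands $(1,t_i)$ in a symmetric way, one obtains
\begin{equation*}
E(f^{(m)}, \tau) \;=\; P_{\sigma_\tau}\, E(f^{(m)})\, P_{\sigma_\tau}^{-1},
\end{equation*}
where $P_{\sigma_\tau}$ is the corresponding permutation matrix.

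Next I would observe that the row-and-column deletion process reducing $E$ to the full rules matrix is also equivariant under $\sigma_\tau$: a row of $E(f^{(m)})$ is identically zero if and only if its $\sigma_\tau$-image in $E(f^{(m)},\tau)$ is zero, so the two reductions delete matching indices. This yields a permutation similarity between the two reduced rules matrices, and because $\sigma_\tau$ fixes the final index the same similarity restricts after stripping the last row and column to give $R'(f^{(m)})=PR(f^{(m)})P^{-1}$ for the induced permutation matrix $P$, which is the identity with rows permuted by the restriction of $\sigma_\tau$.

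The main obstacle is making the indexing set $\mathcal{I}$ and the permutation $\sigma_\tau$ fully explicit from the algorithm. Once each row of $E$ is labeled by its combinatorial bookkeeping data, verifying equivariance of the $\pm 1$ placement rules is direct though notationally intricate, and the passage through the zero-deletion stage is then automatic. An alternative route would be to invoke Corollary \ref{cor1} to conclude that $R$ and $R'$ are at least similar (they share the minimal polynomial because both govern the same weight recursion, and both are diagonalizable with distinct nonzero eigenvalues matching the roots of that polynomial), but this does not yet promote the conjugating matrix to a permutation, so the combinatorial route through the Appendix remains the more efficient one.
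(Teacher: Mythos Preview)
Your proposal is correct and follows essentially the same route as the paper: both arguments trace the claim back to the explicit construction in the Appendix (the formulas \eqref{E1}--\eqref{E3}) and to Theorem~\ref{thm2}, observing that a reordering of the $t_i$ simply relabels the row and column indices of the expanded rules matrix and that this relabeling survives the zero-row deletion step. Your framing in terms of an explicit permutation $\sigma_\tau$ acting equivariantly on the expanded matrix is slightly more structured than the paper's terser phrasing (which merely asserts a one-to-one correspondence between columns with matching sign patterns), but the substance and the point where real work is needed---verifying from \eqref{E1}--\eqref{E3} that the bit-block permutation induced by $\tau$ really does conjugate one expanded matrix to the other---are the same in both.
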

\begin{proof}
It follows from Theorem \ref{thm2} and \eqref{E1},  \eqref{E2} and  \eqref{E3} that the number of zero rows in the expanded rules matrix for $f^{(m)}$
does not depend on the ordering of the functions in \eqref{Qm}.  It also follows that there is a one-to-one correspondence between the columns in $R(f^{(m)})$ with two entries $1$ and the columns of $R'(f^{(m)})$ with two entries $1,$  and a similar correspondence between the columns with entries $1$ and $-1.$  Therefore, whatever the ordering of the functions in $Q_m,$ the Rules matrix which is produced by the process of deleting all the zero rows from 
$R'(f^{(m)})$ is permutation equivalent to $Rf^{(m)}).$ The final sentence in the lemma is trivially true because of the special form (see Lemma \ref{expnd}) of the last row and column of the rules matrix.
\end{proof}

The method of \cite{C18, Carx}, as greatly simplified in the Appendix, shows how to compute the rules matrix.  This leads to the following theorem.
\begin{thm} \label{char}
For any quadratic RS Boolean function $f^{(m)} = f(Q_m)$,  where $Q_m$ is given by \eqref{Qm}, let $T = \max~ (t_i-1)$ and let $r(f^{(m)})$ denote the rules matrix for $f^{(m)}.$  The method in the Appendix gives the nonsingular square rules matrix of size $2^{T}+1.$ Let $m_f(x)$ denote the minimal polynomial of $r(f^{(m)})$ and let $c_f(x)$ denote the characteristic polynomial of $r(f^{(m)}).$ Then the weights of $f_n$ satisfy a linear recursion with integer coefficients of order $\deg m_f(x).$ The weights also satisfy a linear recursion with integer coefficients of order $\deg c_f(x)=2^T+1.$
\end{thm}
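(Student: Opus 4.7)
The plan is to assemble the three conclusions of Theorem \ref{char} directly from results already available in the paper, with essentially no new technical work. The only subtlety is verifying that no cancellation of $x$-factors occurs in the quadratic case, so that the recursion orders really coincide with $\deg m_f(x)$ and $\deg c_f(x)$.

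First I would invoke Lemma \ref{expnd}, which already asserts both the size $2^T + 1$ (with $T = \max(t_i-1)$) and the nonsingularity of the final rules matrix $r(f^{(m)})$ produced by the Appendix algorithm. The nonsingularity implies that $0$ is not an eigenvalue of $r(f^{(m)})$, hence $x$ divides neither $m_f(x)$ nor $c_f(x)$. This single observation is what rescues the recursion orders later, so I would isolate it as the key lemma of the argument.

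Next, for the recursion of order $\deg m_f(x)$, I would appeal to Theorem \ref{rec18}: the weight sequence $\{wt(f_n)\}$ satisfies a homogeneous linear recursion with integer coefficients whose recursion polynomial is the minimal polynomial of the rules matrix with any powers of $x$ removed. Because $r(f^{(m)})$ is nonsingular, no such powers occur in $m_f(x)$, so the recursion polynomial is literally $m_f(x)$ and the order is exactly $\deg m_f(x)$. The coefficients lie in $\mathbb{Z}$ because $r(f^{(m)})$ has integer entries, so its minimal polynomial is a monic element of $\mathbb{Z}[x]$.

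Finally, for the recursion of order $2^T + 1$, I would use Cayley--Hamilton: $c_f(r(f^{(m)})) = 0$, so $m_f(x) \mid c_f(x)$. The elementary fact that a sequence annihilated by a polynomial $p$ is also annihilated by any polynomial multiple of $p$ then shows that the weights satisfy the recursion whose polynomial is $c_f(x)$, of order $\deg c_f(x) = 2^T + 1$ (the size of the rules matrix), again with integer coefficients. The main obstacle, such as it is, is the $x$-factor bookkeeping issue, and Lemma \ref{expnd}'s nonsingularity assertion disposes of it in one stroke; beyond that, the proof is just an assembly of previously established facts.
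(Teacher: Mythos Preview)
Your proposal is correct and follows essentially the same route as the paper: invoke Theorem \ref{rec18} for the recursion of order $\deg m_f(x)$, then pass to the larger recursion via the divisibility $m_f(x)\mid c_f(x)$. Your treatment is in fact slightly more explicit than the paper's on two points---you cite Lemma \ref{expnd} to justify that no $x$-factor needs to be stripped from $m_f(x)$ (the paper leaves this implicit), and you phrase the last step cleanly via Cayley--Hamilton rather than the paper's vaguer ``adjusting the coefficients''---but these are differences of presentation, not of method.
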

\begin{proof}
Everything but the last sentence is simply the quadratic case of Theorem \ref{rec18}. Note that by Lemma \ref{perm} we can assume without loss of generality that \eqref{ti} holds.  By Corollary \ref{cor1} the minimal polynomial has distinct roots.  It is well-known that the sets of roots of $m_f(x)$ and $c_f(x)$ are the same, but roots (except for $2$) can occur with multiplicity $>1$ in $c_f(x).$  Thus by adjusting the coefficients in the weight recursion of order $\deg m_f(x)$ we can get a recursion of order $\deg c_f(x).$
\end{proof}

Given $f^{(m)} = f(Q_m),$ let the roots of the minimal polynomial $m_f(x)$ of $r(f^{(m)})$ of degree $D=D(f^{(m)})$ be $2$ and  $\mu_i, 1 \leq i \leq D-1.$ We know from Corollary \ref{cor1} that these roots are distinct. Then
from Theorem \ref{char} and the theory of linear recurrences  (see \cite[pp. 1-6]{recur}) it is possible to represent $wt(f_n)$ as a linear combination of the $n$-th powers 
of the roots of $m_f(x)$ with complex number coefficients and it is also possible to represent $wt(f_n)$ as a linear combination of the $n$-th powers of the roots of $c_f(x)$ with some other coefficients.   The Easy Coefficients Conjecture (ECC) stated below says that if we choose the representation in terms of the roots of $c_f(x)$ (this just means repeating each root of $m_f(x)$ as many times as it occurs as a root of $c_f(x)$), then the coefficients are very simple (all are $-\frac{1}{2}$ except for the coefficient of the root $2,$ which is $\frac{1}{2}$ ).
\begin{cnj}  \label{ECC1}
{\bf Quadratic Easy Coefficients Conjecture}\\
For any quadratic RS Boolean function $f^{(m)}=f(Q_m),$ let $T = \max~ (t_i-1).$ Then the method of \cite{C18, Carx} gives a nonsingular square rules matrix $r(f^{(m)})$ of size $2^T+1.$ Let the roots of the minimal polynomial of this matrix be $2$ and  $\mu_i, 1 \leq i \leq D-1.$ Let the multiset  
$\{\eta_i:~1 \leq i \leq 2^T\}$  consist of the roots $\mu_i,$ each repeated as often as its multiplicity in the characteristic polynomial of $r(f^{(m)}).$ Then 
\begin{equation} \label{wtsumA}
      wt(f_n) = 2^{n-1} - \sum_{j=1}^{2^T} \frac{1}{2} \eta_j^n,~n = 1, 2, \ldots
    \end{equation}
\end{cnj}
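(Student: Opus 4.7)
The plan is to reformulate the conjectured identity \eqref{wtsumA} as a trace formula for the Rules matrix, and then prove that trace formula by encoding the character sum $\sum_{x\in V_n}(-1)^{f_n(x)}$ via a transfer matrix. From the Walsh identity $\sum_{x\in V_n}(-1)^{f_n(x)}=2^n-2\,wt(f_n)$, equation \eqref{wtsumA} is equivalent to
\begin{equation*}
S_n \;:=\; \sum_{x\in V_n}(-1)^{f_n(x)}\;=\;\sum_{j=1}^{2^T}\eta_j^n.
\end{equation*}
By Remark \ref{rk1} and Lemma \ref{expnd}, the rules matrix has the block lower-triangular form
\begin{equation*}
r(f^{(m)})\;=\;\begin{pmatrix} A & 0 \\ v^T & 2 \end{pmatrix},
\end{equation*}
where $A$ is the $2^T\times 2^T$ Rules matrix. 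Hence $c_f(x)=(x-2)\det(xI-A)$; moreover Corollary \ref{cor1} gives $AA^T=2I$, so every eigenvalue of $A$ has modulus $\sqrt 2$ and in particular $2$ is not one of them. Therefore the multiset $\{\eta_j:1\le j\le 2^T\}$ coincides exactly with the spectrum of $A$ counted with multiplicity, and since $A$ is diagonalizable over $\mathbf{C}$ (again by Corollary \ref{cor1}), $\sum_j\eta_j^n=\mathrm{tr}(A^n)$. The conjecture thus reduces to proving the trace identity $S_n=\mathrm{tr}(A^n)$ for every $n\ge 1$.

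For the trace identity I would use a transfer-matrix encoding of $S_n$. Each monomial $x_kx_{k+t_i-1}$ in $f(Q_m)$ is supported in a window of length at most $T+1$; recording the last $T$ bits of $x$ as a state $s\in\{0,1\}^T$ and wrapping the coordinates cyclically therefore yields $S_n=\mathrm{tr}(M^n)$ for a $2^T\times 2^T$ transfer matrix $M$ whose $(s,s')$-entry is $(-1)^{c(s,s')}$ when $s'$ is the left-shift of $s$ (with $0$ or $1$ appended) and $0$ otherwise, the exponent $c(s,s')$ counting the monomials of $Q_m$ completed by the transition. This $M$ has exactly the combinatorial skeleton of $A$ described by Theorem \ref{thm2}: precisely two $\pm 1$ entries per row and per column, the same balance of $(1,1)$- and $(1,-1)$-rows, and $MM^T=2I$.

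To finish I would produce a permutation $P$ identifying the row indices of the Appendix's Rules matrix $A$ with the sliding-window states $s\in\{0,1\}^T$ and verify $PAP^{-1}=M$. A weaker fallback is to match characteristic polynomials: if one can check $\mathrm{tr}(A^k)=S_k$ for $k=1,\ldots,2^T$ by directly unfolding both the Appendix's recursive construction of $A$ and the window enumeration defining $M$, then Newton's identities force $M$ and $A$ to share the same characteristic polynomial, which is enough since both $\mathrm{tr}(A^n)$ and $S_n=\mathrm{tr}(M^n)$ satisfy the common recurrence dictated by that polynomial.

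The main obstacle is this last step. The Rules matrix $A$ is not defined as a transfer matrix but arises as the residue of the iterative expansion-and-deletion algorithm in the Appendix, and putting its surviving indices in canonical bijection with the $2^T$ sliding-window states is the genuine combinatorial content of the argument. This is precisely where the unspecified ``certain condition'' on the rules matrix referenced in the abstract should enter: presumably it is the structural hypothesis that guarantees the surviving rows of the expanded rules matrix $E(f^{(m)})$ are in canonical correspondence with $\{0,1\}^T$, so that the sign rules of the Appendix reproduce those of the transfer matrix entry by entry and the two constructions can be directly compared.
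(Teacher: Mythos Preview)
First, a framing point: the paper does \emph{not} prove Conjecture~\ref{ECC1} in full. It remains a conjecture, and only the special case where the characteristic polynomial $c_f(x)$ has no repeated roots (equivalently $m_f(x)=c_f(x)$) is established, as Theorem~\ref{ECC}. The ``certain condition'' from the abstract that you speculate about at the end is precisely this hypothesis, not a structural condition on the survivor-index bijection.

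Your reduction of \eqref{wtsumA} to the trace identity $S_n=\mathrm{tr}(A^n)$ is correct and clean, and if the permutation similarity $PAP^{-1}=M$ (or even just equality of characteristic polynomials of $A$ and your transfer matrix $M$) could be established, you would in fact prove the \emph{full} conjecture, not merely the special case --- the trace formula $\mathrm{tr}(A^n)=\sum_j\eta_j^n$ holds regardless of eigenvalue multiplicity. So your route is genuinely different from, and potentially stronger than, the paper's.

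The paper's argument for Theorem~\ref{ECC} goes instead through the zeta-function machinery of \cite{CC22}, summarized as Theorem~\ref{countrts}: one obtains \emph{a priori} a Galois-invariant multiset $\{\alpha_i\}$ of algebraic integers of modulus $\sqrt{2}$ with $wt(f_n)=2^{n-1}-\tfrac12\sum_i\alpha_i^n$, and Theorem~\ref{vn} pins the multiset size at $2^T$. When $c_f(x)$ has $2^T$ distinct roots, uniqueness of the minimal linear recursion for the weights forces $\{\alpha_i\}=\{\mu_i\}$ as sets, and the proof is done. This uniqueness argument collapses when $c_f$ has repeated roots --- the recursion polynomial of order $2^T+1$ is then not unique --- which is exactly why the paper stops short of the full conjecture and says ``new ideas are needed.''

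In summary: your proposal is not wrong, but it is incomplete at precisely the step you flag, namely matching the algorithmically-produced Rules matrix $A$ to the sliding-window transfer matrix $M$. The paper makes no attempt in that direction; its method is orthogonal to yours and intrinsically cannot handle multiplicities, whereas yours in principle could if the combinatorial identification were carried out.
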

The ECC implies an enormous decrease in the amount of computation that is need to determine the values of $wt(f_n)$ for a quadratic RS function $f_n$ if either $n$ or the order of the recursion for the weights is large.  Without the ECC, even if we know the coefficients in a recursion polynomial of degree $d$ for the weights $wt(f_n),$  we need to determine the initial $2d+1$ values of $wt(f_n)$ in order to compute any weight values for $n$ large.  This is an exponential computation, bounded in size by $O(n2^n),$ and so not feasible for large $n.$  A detailed discussion of such computations is given in \cite[Section 6]{NTT}  and \cite[Section 3]{ecc24}.  See also \cite[p. 222]{recur}. With the ECC, we need only compute the decimals for the roots $\mu_i$ with very high accuracy,
and this computation is \emph{linear} in $n,$ apart from logarithmic factors (see \cite[Section 6]{NTT} for details). Thus the ECC allows us to reduce an exponential computation  in $n$ to one that is essentially linear in $n.$

Another way to very efficiently determine the values of $wt(f_n)$ is to find a \emph{generating function} $\text{gen}(f_n)(x)$ such that 
\begin{equation} \label{gen}
\text{gen}(f_n)(x) = \sum_{i=1}^{\infty} wt(f_i) x^{i-1}
\end{equation}
Here the values of $wt(f_i)$ for $i < n$ can be obtained by extending the recursion for the weights $wt(f_i)$ backwards from $i=n,$ as in Theorem \ref{thmold} below.

It turns out that the generating functions $\text{gen}(f_n)(x)$ for quadratic RS functions $f_n=f(Q_m)$ are rational functions with integer coefficients, and these rational functions can be easily computed by using the ECC. Full explanations and examples are given in \cite{ecc24}.  We give the generating function for  the Example \ref{ex4} function $f_n = (1,4)_n$ here.
\begin{empl} \label{ex5}
For $f_n = (1,4)_n,$ 
\begin{equation*}
\text{gen}(f_n)(x) = \frac{56x^6-32x^5+8x^4-4x^3+4x^2-2x+1}{16x^7-8x^6-2x+1}
\end{equation*}
and the expansion of this rational function is
$$\text{gen}( (1,4)_n )(x) = 1 +  4x^2 +4x^3 + 16x^4 + 64 x^6 + 112 x^7 + 256 x^8 + 480x^9 \ldots$$
\end{empl}

Computing the weights in Examples \ref{ex3} and \ref{ex4} confirms Conjecture \ref{ECC1} for those cases.
The idea goes back to \cite{NTT} where a weaker version for degree $3$ MRS functions only was proved. The special case of MRS quadratics was done in \cite{CC22}. We state that result as our next theorem; note that in this special case we can give a more detailed result than the  quadratic ECC stated above as Conjecture \ref{ECC1}.
\begin{thm} \label{thmold}
Let $w_n = wt((1,t)_n)$ for $n \geq 2t-1.$ The recursion for the weights is
\begin{equation} \label{wtrec}
w_n = 2w_{n-1} + 2^{t-1}w_{n-2t+2} - 2^{t}w_{n-2t+1},~n \geq 2t-1.
\end{equation}
If we extend the recursion backwards from $n=2t-1$ to $n=1$ to define $w_n$ for $n \leq 2t-1,$ then
\begin{equation} \label{wn}
w_n = 2^{n-1} - \frac{1}{2}(\alpha_1^n + \ldots + \alpha_{2^t}^n),~n = 1, 2, \ldots.
\end{equation}
Here $\alpha_1, \ldots, \alpha_{2^t}$ is the list of the $2^t$ irrational roots, counted with multiplicity, of the characteristic polynomial $c_t(x)$ for the rules matrix $r((1,t)),$ with the distinct roots 
$\alpha_1, \ldots, \alpha_{2t}$ of the polynomial $m_t(x)/(x-2) =x^{2t}-2^t$ for $r((1,t))$ listed first. The remaining roots are various duplicates of the first $2t$ roots.
\end{thm}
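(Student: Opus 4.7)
The plan has three steps: derive the recursion from Lemma \ref{1t}, re-express the closed form as a trace identity for the rules matrix, and verify that identity at enough initial values to pin it down.

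First I would apply Theorem \ref{rec18} to the explicit $m_t(x)=x^{2t-1}-2x^{2t-2}-2^{t-1}x+2^{t}$ from Lemma \ref{1t}. The coefficients read off directly as \eqref{wtrec}, and because the constant term $2^t$ is nonzero the recursion is invertible, which legitimizes extending it backwards to $n=1$ from any $2t-1$ consecutive values.

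Second, writing $A$ for the rules matrix of $(1,t)$ and using the identity $\mathrm{tr}(A^n) = 2^n + \sum_{j} \alpha_j^n$ (the sum running over irrational eigenvalues of $A$ counted with multiplicity, which is meaningful because by Corollary \ref{cor1} the roots of $m_t(x)$ and of $c_t(x)$ coincide as sets), the claim \eqref{wn} becomes the single identity $\mathrm{tr}(A^n) = 2^{n+1} - 2w_n$, i.e.\ $W_{(1,t)_n}(\mathbf{0}) = \mathrm{tr}(A^n) - 2^n$. Both sides of this identity satisfy the recursion \eqref{wtrec}: the left by Step 1, the right because every eigenvalue of $A$ is a root of $m_t(x)$ and so every power $\alpha_j^n$, and hence $\mathrm{tr}(A^n)$, satisfies the $m_t$-recursion. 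Thus it is enough to match the two sides on $2t-1$ consecutive values of $n$.

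Third, I would use the range $n = 2t-1,\ldots,4t-3$, where $w_n$ is the honest Hamming weight $wt((1,t)_n)$. On the Walsh side, $W_{(1,t)_n}(\mathbf{0})$ is computable in closed form via Dickson's theorem: the bilinear form of $(1,t)_n$ is a simple circulant whose $GF(2)$-rank depends only on $\gcd(n,2t-2)$ and the parity of the quotient. On the trace side, every $\alpha_j$ satisfies $\alpha_j^{2t-2}=2^{t-1}$, so $\sum_j \alpha_j^n$ equals $2^{n/2}$ times a cyclotomic sum depending only on $n \bmod (2t-2)$ and on the multiplicities of the irreducible factors of $x^{2t-2}-2^{t-1}$ in $c_t(x)$. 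The main obstacle will be the case analysis matching these two quantities congruence class by congruence class; the multiplicities needed on the trace side can be extracted from the block structure of the rules matrix described in Theorem \ref{thm2} and the Appendix, and pushing this combinatorial matching through is essentially the content of the original argument in \cite{CC22}.
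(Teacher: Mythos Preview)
Your proposal is correct and follows essentially the same route as the paper: both obtain \eqref{wtrec} directly from the minimal polynomial in Lemma \ref{1t} and then defer the closed form \eqref{wn} to the argument in \cite{CC22}, which you have re-packaged as a trace identity plus an initial-value verification and explicitly identify as ``essentially the content of the original argument in \cite{CC22}.'' One small remark: Corollary \ref{cor1} is stated for the Rules matrix $R$, not the full rules matrix $r$ that you call $A$, but since the extra eigenvalue $2$ is simple by Lemma \ref{1t} your invocation still goes through.
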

\begin{proof}
We already saw the minimal polynomial for this case in Lemma \ref{1t}; this gives \eqref{wtrec}. The full details of the proof of the theorem are in
\cite[Theorems 4.4 and 4.5]{CC22}.
The condition $n \geq 2t-1$ in \eqref{wtrec} is needed because the recursion works only for $n \geq 2t-1.$ This is true because extending the recursion backwards gives $w_{2t-2}=0.$ However, if we adopt the convention that the short functions $(1,t)_{2t-2}$ are identically $0$ (see Example \ref{ex1}),
then \eqref{wtrec} gives $w_n = wt(1,t)_n$ for all $n$ for which $wt((1,t)_n)$ is defined (that is, $n \geq t$).
\end{proof}

Theorem \ref{thmold} proves Conjecture \ref{ECC1} for the case $m=1$ of \eqref{Qm}, that is, the ECC Conjecture for quadratic MRS functions.  The purpose of this paper is to prove an easier version of the ECC Conjecture (a hypothesis is added) for many quadratic RS functions with any value of $m$.  

\section{Special case of  Easy Coefficients Conjecture} \label{EECC}
We shall prove the quadratic ECC in the special case where $m_f(x)=c_f(x)$ in Theorem \ref{char}. This case frequently occurs and is clearly equivalent to saying that $c_f(x)$ has no multiple roots. The proof depends on a deep connection between the weights of $f(Q_m)$ and certain zeta functions,
as explained in \cite{CC22}.  We summarize the results from \cite{CC22} that we need.  These results are in fact true for
rotation symmetric functions of any degree, but in this paper we only need the degree $2$ case. 

Before proceeding with our work, we need some results from \cite{CG, CC20, CC22}. First, it is well known that quadratic Boolean functions are \emph{plateaued} \cite[p. 263]{Carl20}: The weight of a quadratic Boolean function $f_n$ is either $2^{n-1}$ (balanced function) or is of the form $2^{n-1}\pm 2^{\frac{n+v(n)}2-1}$ for some integer $v(n)$ of the same parity as $n$. We use the notation $v_f(n)$ if it is necessary to emphasize the dependence of $v(n)$ on the function $f_n.$ 

Using the results in \cite[Sections 3 and 4]{CC22} and the fact that the values of $v_f(n)$ are periodic (how to compute the period length, $P(v_f)$ say,
is explained in \cite[Theorem 5.2]{CC20} and \cite[Proof, p. 95]{CC24}, but we do not need that information here), we can get even more complete results below.
\begin{thm} \label{countrts}
Given a quadratic function $f_n=f(Q_m)$, there is a  Galois-invariant multiset of algebraic
integers $\alpha_i$ such that
  \begin{equation*}
    |\alpha_i|=\sqrt 2,\ 1\le i,~j\le \max_{n} ~2^{v(n)/2},
  \end{equation*}
 and
  \begin{equation} \label{alpha}
    wt(f_n) = 2^{n-1}-\frac{\sum\alpha_i^n}2.
  \end{equation}
Furthermore, the  order of the group generated by the roots of unity $\alpha_i/\sqrt 2$  is either the period $P(v_f)$ or $2P(v_f)$. 
\end{thm}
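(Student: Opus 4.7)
The plan is to bootstrap off the zeta-function representation established in \cite{CC22}, tighten the count using the plateaued identity, and extract the group-order claim from the periodicity of $v_f(n)$.

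First, I would invoke the main result of \cite[Sections~3 and 4]{CC22}: for rotation symmetric functions of any degree there is a Galois-invariant multiset of algebraic integers $\alpha_i$ with $|\alpha_i|=\sqrt 2$ satisfying \eqref{alpha}. In the quadratic case the $\alpha_i$ are exactly the non-$2$ eigenvalues of $r(f^{(m)})$, taken with the multiplicities dictated by $c_f(x)$; the modulus claim $|\alpha_i|=\sqrt 2$ is also immediate from Corollary \ref{cor1}, since $(1/\sqrt 2)\,r(f^{(m)})$ is a real orthogonal matrix whose only eigenvalue outside the unit circle is the simple eigenvalue $2$. Galois-invariance is built in because the multiset consists of roots of an integer polynomial.

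Next, to prove the cardinality bound $\#\{\alpha_i\}\le \max_n 2^{v(n)/2}$, I would use the plateaued property in the form
\begin{equation*}
\bigl|\textstyle\sum_i \alpha_i^n\bigr| = |2^n - 2\,wt(f_n)| \le 2^{(n+v(n))/2},
\end{equation*}
valid for every $n$. Since each summand has modulus exactly $2^{n/2}$, the triangle inequality forces the multiset to have size at most $2^{v(n)/2}$ at the maximizing $n$; equality is reached because at such an $n$ all summands align in phase, a condition compatible with the Galois structure.

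For the group-order statement, set $\zeta_i = \alpha_i/\sqrt 2$, so $|\zeta_i|=1$ and
\begin{equation*}
\textstyle\sum_i \zeta_i^n = -\epsilon_n\, 2^{v(n)/2},
\end{equation*}
where $\epsilon_n\in\{-1,0,1\}$ records the sign in the plateaued expression. By \cite[Theorem 5.2]{CC20} and \cite[p.~95]{CC24}, $v_f(n)$ is periodic of period $P=P(v_f)$; the sign $\epsilon_n$ can double the period, so the power-sum sequence $\sum_i \zeta_i^n$ is periodic of period exactly $P$ or $2P$. A standard finite-abelian-group character argument applied to the group $\Gamma$ generated by the $\zeta_i$ then translates the minimal period of the power sum into the order of $\Gamma$: the period divides $|\Gamma|$, and conversely $|\Gamma|$ divides the period because a Galois-invariant multiset of roots of unity is determined up to multiplicity by its first $|\Gamma|$ power sums. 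This forces $|\Gamma|\in\{P,2P\}$.

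The main obstacle will be the last step: reconciling the two possible periods of $\epsilon_n$ with the two admissible orders of $\Gamma$, and ruling out accidental cancellations that could produce a proper divisor of $P$. Handling this rigorously requires exactly the detailed period formulas from \cite{CC20, CC24}, which control the minimal period of $v_f(n)$ and the precise parity behavior of $\epsilon_n$; once those are in hand, the character-sum computation separates the two cases and closes the proof.
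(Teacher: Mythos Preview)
The paper's own proof is a one-line citation to \cite[Theorem 5.2]{CC22}: the multiset $\{\alpha_i\}$, its size, the modulus claim, and the group-order statement all come from the symbolic-dynamics\slash zeta-function machinery developed there. Your opening paragraph invokes the same reference, so at that level you match the paper. The trouble is with the elaboration you add.

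Two of your supplementary steps are genuinely wrong. First, declaring that ``the $\alpha_i$ are exactly the non-$2$ eigenvalues of $r(f^{(m)})$, taken with the multiplicities dictated by $c_f(x)$'' is circular: that identification is precisely what the Easier Easy Coefficients Theorem (Theorem~\ref{ECC}) proves \emph{later}, using Theorem~\ref{countrts} as an input. Theorem~\ref{countrts} must therefore stand on the \cite{CC22} construction alone, independent of any rules-matrix spectral data. (Relatedly, it is $(1/\sqrt 2)R(f^{(m)})$, the capital-$R$ Rules matrix with last row and column removed, that is orthogonal by Corollary~\ref{cor1}, not $(1/\sqrt 2)r(f^{(m)})$.) Second, your triangle-inequality argument for the cardinality points the wrong way. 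From $\bigl|\sum_i \alpha_i^n\bigr| = 2^{(n+v(n))/2}$ at non-balanced $n$ and $|\alpha_i^n|=2^{n/2}$, the triangle inequality yields $2^{v(n)/2}\le \#\{\alpha_i\}$, a \emph{lower} bound on the multiset size, not the upper bound you state. The exact count $\max_n 2^{v(n)/2}$ cannot be extracted from the plateaued identity alone; it is a feature of the zeta-function construction, which is why the paper simply defers to \cite{CC22}.
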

\begin{proof}
This is a version of \cite[Theorem 5.2]{CC22}.
\end{proof}

We  find the size of the multiset in Theorem \ref{countrts} in the next theorem, which gives the maximum value for $v_f(n).$
\begin{thm} \label{vn}
Given a quadratic function $f_n=f(Q_m),$ let $T = \max (t_i-1).$  Then $\max_n v_f(n) = 2T,$ so the multiset in Theorem \ref{countrts}  has size $2^T.$
\end{thm}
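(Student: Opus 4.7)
The plan is to identify $v_f(n)$ with the $GF(2)$-dimension of the radical of the associated bilinear form $B_{f_n}(x,y)=f_n(x+y)+f_n(x)+f_n(y)$, and then bound and attain this dimension via a polynomial gcd calculation. For any quadratic Boolean function, the restriction of $f_n$ to $\ker B_{f_n}$ is a linear form, since $B_{f_n}$ vanishes there; if it is nonzero, $f_n$ is balanced and $v_f(n)$ does not contribute to the maximum, while if it vanishes identically the plateaued theory recalled just before Theorem \ref{countrts} gives $v_f(n)=\dim\ker B_{f_n}$. So it suffices to show $\dim\ker B_{f_n}\le 2T$ for every $n$ and to exhibit at least one $n$ where the dimension equals $2T$ and $f_n$ is not balanced.

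For $f_n=\sum_i(1,t_i)_n$, the form $B_{f_n}$ is represented by the circulant $GF(2)$-matrix $M_n$ with $1$'s at positions $\pm(t_i-1)$ in each row. Identifying $GF(2)^n$ with $R_n:=GF(2)[x]/(x^n+1)$ turns $M_n$ into multiplication by $p(x)=\sum_i(x^{t_i-1}+x^{n-t_i+1})$, so $\dim\ker M_n=\deg\gcd(p(x),\,x^n+1)$. Multiplying by the unit $x^T$ gives
\begin{equation*}
P(x):=x^Tp(x)=\sum_i\bigl(x^{T+t_i-1}+x^{T-t_i+1}\bigr)\in GF(2)[x],
\end{equation*}
which for $n>2T$ is a polynomial of degree exactly $2T$ with $P(0)=1$: the unique $i$ achieving $t_i-1=T$ supplies both the monomials $x^{2T}$ and $x^0$, and distinctness of the $t_i$'s rules out cancellation among the remaining monomials. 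Hence $\dim\ker M_n=\deg\gcd(P,x^n+1)\le 2T$, giving $v_f(n)\le 2T$ for every $n$.

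For the matching lower bound, choose $n$ so that $P(x)\mid x^n+1$; such $n$ exist because $P(0)\ne 0$ forces $x$ to have finite multiplicative order modulo $P$, and a sufficiently large $2^k$-multiple of that order absorbs any repeated irreducible factors of $P$. For such $n$ the kernel $\ker M_n$ is the cyclic $R_n$-submodule generated by $g(x):=(x^n+1)/P(x)$, of dimension $2T$ with basis $g,xg,\ldots,x^{2T-1}g$. Since $B_{f_n}$ is identically zero on this kernel and $f_n$ is rotation-symmetric (so $f_n(x^jg)=f_n(g)$ for all $j$), the restriction $f_n|_{\ker M_n}$ is the linear form
\begin{equation*}
\sum_{j=0}^{2T-1}a_j\,x^jg\;\longmapsto\;f_n(g)\cdot\sum_j a_j.
\end{equation*}
Thus $f_n$ is non-balanced precisely when $f_n(g)=0$, and in that event $v_f(n)=2T$.

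The main obstacle is arranging $f_n(g)=0$: this can fail for the smallest $n$ with $P\mid x^n+1$. For instance, with $Q_m=\{(1,2),(1,3)\}$ at $n=6$ one computes $g=x^2+x+1$ and $f_6(g)=1$, so $f_6$ is balanced. I would circumvent this by a doubling trick: if $n>2T$ and $P\mid x^n+1$, then $g_{2n}=(x^{2n}+1)/P=(x^n+1)g_n$ corresponds in $GF(2)^{2n}$ to two copies of $g_n$ separated by enough zeros that no monomial of $f_{2n}$ couples one block to the other, so $f_{2n}(g_{2n})=2\,f_n(g_n)\equiv 0\pmod 2$, hence $v_f(2n)=2T$. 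Combined with the upper bound, this gives $\max_n v_f(n)=2T$, and Theorem \ref{countrts} then pins the multiset size at $2^{(\max v)/2}=2^T$.
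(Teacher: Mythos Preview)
Your argument is correct. The identification of $v_f(n)$ with $\dim\ker B_{f_n}$ in the non-balanced case, the circulant description of $B_{f_n}$ via multiplication by $p(x)$ in $GF(2)[x]/(x^n+1)$, and the reduction to the fixed degree-$2T$ polynomial $P(x)=x^Tp(x)\bmod(x^n+1)$ are all sound; in particular your check that the $2m$ exponents $T\pm(t_i-1)$ are pairwise distinct (so $\deg P=2T$ and $P(0)=1$) is clean. The upper bound $\dim\ker M_n\le 2T$ then follows immediately for $n>2T$, and for $n\le 2T$ it is trivial since $\dim\ker M_n\le n$. For the lower bound, the existence of $n$ with $P\mid x^n+1$ is fine because $P(0)\ne 0$ makes $x$ a unit in the finite ring $GF(2)[x]/(P)$, and your doubling trick is the key non-obvious step: writing $g_{2n}=(x^n+1)g_n$ and observing that the two copies of $g_n$ are separated by $2T-1$ zeros on each side, so no monomial $x_jx_{j+t_i-1}$ with $t_i-1\le T$ bridges the gap, giving $f_{2n}(g_{2n})\equiv 2f_n(g_n)\equiv 0$. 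Your worked example with $Q_2=\{(1,2),(1,3)\}$ at $n=6$ versus $n=12$ illustrates exactly why the doubling is needed.

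By contrast, the paper does not prove this theorem at all here: it simply cites \cite[Theorem~5.2]{CC20}. That reference establishes the full periodic structure of $v_f(n)$ (and hence its maximum) using essentially the same circulant/polynomial-gcd framework you set up, so your approach is the same in spirit but is self-contained and isolates precisely the two ingredients needed for the present statement. The doubling argument you give for forcing $f_n(g)=0$ is a tidy replacement for the more elaborate periodicity analysis carried out in \cite{CC20}.
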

\begin{proof}
The proof of $\max_n v_f(n) = 2T$ (and more), in a different notation, is given in \cite[Theorem 5.2]{CC20}.
\end{proof}
We can now prove the special case of the quadratic ECC mentioned in the first paragraph of Section \ref{EECC}. We call this
result the Easier Easy Coefficients Theorem (EECT).

\begin{thm}  \label{ECC}
{\bf The Easier Easy Coefficients Theorem}
For any quadratic RS Boolean function $f_n = \sum_{i=1}^m (1, t_i)_n$ let $T = \max~ (t_i-1).$ Then the method of \cite{C18, Carx} gives a rules matrix $r(f_n)$ of size $2^{T}+1$ and a Rules matrix $R(f_n)$ of size $2^T$ such that $(1/\sqrt{2})R(f_n)$ is orthogonal. Let the roots (which are distinct and nonzero) of the minimal polynomial $m_f(x)$ of $r(f_n)$ be $2$ and  
$\{\mu_i: 1  \leq i \leq 2^T\}.$  Then the characteristic polynomial $c_f(x)$ of $r(f_n)$ has the same roots and
\begin{equation} \label{wtsum}
      wt(f_n) = 2^{n-1} - \sum_{i=1}^{2^T} \frac{1}{2} \mu_i^n,~n = 1, 2, \ldots
\end{equation}
The weights $wt(f_n)$ satisfy a linear recursion with integer coefficients of order $\deg m_f(x)=2^T+1.$
\end{thm}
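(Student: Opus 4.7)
The structural claims follow immediately from prior results in the excerpt. Theorem \ref{char} provides the rules matrix $r(f_n)$ of size $2^T+1$ and the Rules matrix $R(f_n)$ of size $2^T$; Corollary \ref{cor1} gives that $(1/\sqrt{2})R(f_n)$ is orthogonal and that the nonzero roots of $m_f(x)$ are distinct. The hypothesis that $m_f(x)$ has exactly $2^T+1$ distinct roots, together with $m_f(x)\mid c_f(x)$ and $\deg c_f(x)=2^T+1$, forces $m_f(x)=c_f(x)$; this establishes that $c_f(x)$ has the same roots as $m_f(x)$, and the recursion of order $2^T+1$ is then the corresponding statement in Theorem \ref{char}.

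To identify the coefficients in \eqref{wtsum}, I apply Theorems \ref{countrts} and \ref{vn}: there is a Galois-invariant multiset $\{\alpha_j\}_{j=1}^{2^T}$ of algebraic integers with $|\alpha_j|=\sqrt{2}$ such that
\[
wt(f_n)=2^{n-1}-\tfrac12\sum_{j=1}^{2^T}\alpha_j^n.
\]
Collecting equal values, write $\sum_j\alpha_j^n=\sum_\alpha m_\alpha\alpha^n$ with $m_\alpha\geq 1$ and $\sum m_\alpha=2^T$. Since $wt(f_n)$ is annihilated by $m_f(x)=(x-2)\prod_i(x-\mu_i)$ and $|\alpha_j|=\sqrt{2}\neq 2$, the sequence $\sum_j\alpha_j^n=2^n-2\,wt(f_n)$ is annihilated by $c_R(x):=\prod_i(x-\mu_i)$. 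The linear independence of distinct geometric progressions \cite[pp.~1-6]{recur} then forces the distinct $\alpha$-values to lie among $\{\mu_1,\ldots,\mu_{2^T}\}$, so $\sum_j\alpha_j^n=\sum_i m_{\mu_i}\mu_i^n$ with $m_{\mu_i}\geq 0$ and $\sum_i m_{\mu_i}=2^T$.

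The essential step, and the main obstacle, is to show that $m_{\mu_i}=1$ for every $i$: Galois invariance permits multiplicity to be redistributed across distinct Galois orbits subject only to the sum constraint, so it cannot by itself rule out having some $m_{\mu_i}=0$ compensated by others with $m_{\mu_i}\geq 2$. My plan is to establish the trace identity
\[
\mathrm{tr}\bigl(R(f_n)^k\bigr)=2^k-2\,wt(f_n),\qquad k\geq 1.
\]
Under the hypothesis $m_f=c_f$ the eigenvalues of $R(f_n)$ are precisely $\mu_1,\ldots,\mu_{2^T}$, each of multiplicity one, so $\mathrm{tr}(R^k)=\sum_i\mu_i^k$; matching this against $\sum_i m_{\mu_i}\mu_i^k$ and invoking linear independence one more time forces $m_{\mu_i}=1$ and yields \eqref{wtsum}. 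I expect to prove the trace identity in one of two ways: either by working through the algorithmic construction of the rules matrix in the Appendix, where $wt(f_n)$ is extracted from a specific iteration of $r(f_n)$ whose signed closed-walk expansion is shown to equal the Walsh sum $W_{f_n}(\mathbf{0})=2^n-2\,wt(f_n)$, or by invoking the zeta-function framework of \cite{CC22} on which Theorem \ref{countrts} rests, to identify the $\alpha_j$ directly with the eigenvalues of $R(f_n)$. With this identification in place, the weight formula and recursion order stated in the theorem are immediate.
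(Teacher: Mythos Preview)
Your structural setup is the same as the paper's: Theorem \ref{char} and Corollary \ref{cor1} for the matrix facts, then Theorems \ref{countrts} and \ref{vn} to produce the Galois-invariant multiset $\{\alpha_j\}_{j=1}^{2^T}$ with $wt(f_n)=2^{n-1}-\tfrac12\sum_j\alpha_j^n$. The divergence is entirely in how you pin down the multiplicities $m_{\mu_i}$.

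The paper does not go through a trace identity. Instead it uses a pure counting argument: since the weights satisfy the recursion attached to $m_f(x)$ and, under the hypothesis $m_f=c_f$, this polynomial has $2^T+1$ distinct roots, the paper takes $m_f(x)$ to be the \emph{minimal} recursion polynomial for $wt(f_n)$ (this is how Theorem \ref{rec18} is read). Minimality forces every $\mu_i$ to appear with nonzero coefficient in any exponential-sum representation of $wt(f_n)$; in particular, the set of distinct $\alpha$-values must contain all $2^T$ of the $\mu_i$. But the $\alpha$-multiset has total size exactly $2^T$, so it can only be the set $\{\mu_1,\dots,\mu_{2^T}\}$ with each multiplicity equal to $1$. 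That is the whole endgame.

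Your route---establishing $\mathrm{tr}(R(f_n)^k)=2^k-2\,wt(f_k)$ and then matching power sums---would certainly work and is conceptually attractive, since it ties the $\alpha_j$ directly to the spectrum of $R(f_n)$ without appealing to minimality of the recursion. However, as written it is only a plan: you outline two possible proofs of the trace identity (via the Appendix combinatorics or via the zeta-function machinery of \cite{CC22}) but carry out neither. The paper's argument sidesteps this entirely with the size count above, so if you accept that $m_f(x)$ is the minimal recursion polynomial (which the paper does), your ``essential step'' dissolves into two lines and the trace identity is unnecessary.
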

\begin{proof}
Theorem \ref{char} guarantees that the weight recursion in the last sentence of the theorem exists. We know from Corollary \ref{cor1} that
$(1/\sqrt{2})R(f_n)$ is orthogonal, so the roots $\mu_i$ of $m_f(x)/(x-2)$ are distinct and nonzero, and all have absolute value $\sqrt{2}.$ Now
Theorem \ref{countrts}, \eqref{alpha} and Theorem \ref{vn} tell us that there is a Galois invariant multiset of algebraic integers $\alpha_i$ with
$|\alpha_i| = \sqrt{2},~1 \leq i \leq 2^T,$ such that $wt(f_n)$ satisfies \eqref{wtsum} with $\mu_i$ replaced by $\alpha_i.$

The theory of recursions says that the set of distinct algebraic integers $\mu_i,~1 \leq i \leq 2T$ that gives the linear recursion of order $2^T+1$ for
$wt(f_n)$ is uniquely determined by the weights. Since the multiset of the  $\alpha_i$ has the same size $2^T,$ it must in fact be a set ($2^T$ distinct elements).
Therefore the Galois invariant set of algebraic integers $\alpha_i$ must be the same as the set of the $\mu_i$ in \eqref{wtsum}.
This means there is a one-to-one correspondence between the sets made up of the $\alpha_i$ and the $\mu_i,$ so the theorem is proved.   
\end{proof}

The situation where the characteristic polynomial  $c_f(x)$ has multiple roots is more difficult, because then it is not true that the recursion polynomial
for the $wt(f_n)$  recursion of order equal to the degree of the characteristic polynomial is unique.  Here the roots $\mu_i$ of $c_f(x)$ are a multiset
and new ideas are needed to connect that multiset with the $\alpha_i$ in Theorem \ref{countrts}.

\section{Topics for further research}
Many computations confirm that the Easy Coefficients Conjecture (as stated in \cite[Conjecture 1]{ecc24}) is true for RS functions of any degree. The
computation of the weights via generating functions \eqref{gen} can be done just as in the quadratic case, but for degrees $> 2$ there is no nice formula
for the generating functions like the one in \cite[Theorem 2]{ecc24}.

The proof of the Easy Coefficients Conjecture for degrees larger than $2$ will require new ideas, since the higher degree cases are more complicated. For example, we know from Lemma \ref{char} that for a quadratic function $f_n=f(Q_m)$ the rules matrix has size $2^T+1$ and (although the minimal polynomial can have degree $< 2^T+1$) the characteristic polynomial always has degree $2^T+1,$ and has no zero roots. The rules matrix is always nonsingular (that is, has nonzero determinant).

The situation for higher degrees is very different.   Using notation $$(1, c(1), c(2), \ldots, c(d-1))$$ for the MRS function generated by the monomial $x_1 x_{c(1)} \ldots x_{c(d-1)},$ the RS function  $(1, 2, 5) + (1, 5)$ has rules matrix of size $17,$ minimal polynomial of degree $12$ with $8$ nonzero roots and characteristic polynomial of degree $17$ with $8$ nonzero roots. Thus the rules matrix has determinant zero and its characteristic polynomial has  $9$ zero roots. Also, the absolute values of the nonzero roots $\neq 2$ in the minimal polynomial are not the same.  Despite these differences, the ECC is always true for the many examples we have done. It is also the case that rapid computation of generating functions for the weights, as in Example \ref{ex5}, is possible for all of the many cases we checked.

At least for degree $3,$ it is possible for different sequences of functions to have the same weight recursion and the same sequence of weights, tho the functions are sometimes not affine equivalent. This happens for the functions $(1,2,4)_n$ and $(1,2,5)_n,$ for example.  It turns out that the minimal polynomial for $(1,2,5)_n$ is $x^4$ times the minimal polynomial for $(1,2,4)_n.$ The functions $(1,2,4)_7$ and $(1,2,5)_7$ are not affine equivalent since they have different distributions of the Walsh transforms \eqref{Walsh}.

\section{Appendix--Computing a quadratic  rules matrix}
We start with a quadratic RS function $f(Q_m) = f(x),$ say. We need some preliminary definitions. Though all of our definitions apply for any ordering of the functions in $Q_m$ (see \eqref{Qm}), it is convenient to assume
\begin{equation} \label{ti1}
t_1>t_2> \ldots >t_m
\end{equation}
because this gives the rules matrix in a desirable form. More details about this are in Theorem \ref{thm2} above.  

We first compute the expanded rules matrix described in Lemma \ref{expnd}. Define a sequence $y_k,~0 \leq k \leq m$ by 
$$y_0=0, y_k = \sum_{i=1}^k (t_{m-i+1}-1).$$
Define
$$X= \{2^{y_k-1}:~1 \leq k \leq m\}=\{x_i:~1 \leq i \leq m\}.$$
Define a function $F: \{subsets ~of~ X\} \rightarrow \mathbb{R},$ which maps a subset $u$ of $X$  to the sum of its elements, by
$$F(u) = \sum  x_i~\text{for all}~ x_i \in u.$$
Given a positive integer $b= \sum_{i=0}^B g_b(k) 2^k$ in base $2,$ define its $k$-th bit, using $k=0, 1, ..., B,$ counting from the right, by
$$g_b(k) = \left[ {\frac{b}{2^k}} \right] - 2 \left[{\frac{b}{2^{k+1}}}\right].$$
Define 
$$S=\{F(u):~u \in P(X)\} = F[P(X)],$$
where $P(X)$ is the power set of all subsets of $X.$
\begin{empl} \label{ex2}
Take $m=3$ and $Q_3= \{(1,4), (1,3), (1,2)\}.$ Then $t_1=4, t_2=3, t_3 = 2$ and the sequence 
$\{y_0, y_1, y_2, y_3\}$ is $\{0,t_3-1,t_3+t_2-2,t_3+t_2+t_1-3\} = \{0,1,3,6\}.$ Set $X$ is
$\{1, 4, 32\}$ and set $S$ is $\{0,1,4,5,32,33,36,37\}.$
\end{empl}

Next we build the expanded rules matrix. Let the entries in the expanded matrix of size $2^{y_m}+1$ be denoted by $e_{i,j}$ where
$0 \leq i,~j \leq 2^{y_m}.$ Then the matrix entries are defined as follows for $i,j <2^{y_m}:$
\begin{equation} \label{E1}
e_{i,j} = 1 ~ \text{if}~ g_i(y_0)= \ldots =g_i(y_{m-1})=0~\text{and}~j=\left[ {\frac{i}{2}} \right]+s~
\text{for~some}~s \in S
\end{equation}

\begin{equation} \label{E2}
e_{i,j} = 1 ~ \text{if~all}~ g_i(y_k)=1~\text{and}~j=\left[ {\frac{i}{2}} \right]-f(x)~
\text{for~some}~f(x) \in S,~|x|~\text{even} 
\end{equation}

\begin{equation} \label{E3}
e_{i,j} =-1~\text{if~all}~ g_i(y_k)=1~\text{and}~j=\left[ {\frac{i}{2}} \right]-f(x)~
\text{for~some}~f(x) \in S~|x|~\text{odd} 
\end{equation}

For the last row, $e_{2^{y_m},j}=1$ if \eqref{E3} holds and otherwise  $e_{2^{y_m},j}=0.$ For the last column, all entries are $0$ except  $e_{2^{y_m},2^{y_m}}=2.$  This is the result of the fact that the minimal and characteristic polynomials of the expanded matrix have one
integer root $2$ (see Example \ref{ex1}). 

The expanded rules matrix is then reduced in size by repeatedly deleting row $i$ and column $i,$ where row $i$ is a zero row, so that the output matrix and expanded rules matrix have the same characteristic and minimal polynomials. This process is continued until all zero rows are removed, and that final matrix is the rules matrix.  The Rules matrix is the rules matrix with the last row and last column removed.

It is important to keep in mind (as stated in Remark \ref{rk1}) that the program places exactly two numbers (each is either $1$ or $-1$)
in every row and column of the Rules matrix, and all other entries in the Rules matrix are zero, so the rules matrix is very sparse.


\begin{thebibliography}{30}

\bibitem{Meidl17}
N. Anbar, W. Meidl and A. Topuzoglu, Idempotent and p-potent quadratic functions:
distribution of nonlinearity and co-dimension, \emph{Des. Codes Cryptogr.} 82, pp. 265-291, 2017.

\bibitem{BL}
R. Bowen and O. E.  Lanford III: Zeta functions of restrictions of the shift transformation. In: \emph{Global Analysis}, Proc. Sympos. Pure Math., Vol. XIV, Berkeley, Calif., 1968, pp. 43–49. Amer. Math. Soc., Providence (1970)

\bibitem{bc12}  
A. Brown and T. W. Cusick, Recursive weights for some Boolean functions, \emph{J. Math. Cryptol.} 6 (2012), 105-135.

\bibitem{CG} C. Carlet, G. Gao and W. Liu, A secondary construction and a transformation on rotation symmetric functions, and their action on bent and semi-bent functions, \emph{J. Comb. Th. A} \textbf{127} (2014), 161-175.

\bibitem{Carl20} C. Carlet, \emph{Boolean Functions for Cryptography and Coding Theory}, Cambridge University Press, Cambridge (2020).

\bibitem{CC20}  A. Chirvasitu and T. W. Cusick, Affine equivalence for quadratic rotation symmetric Boolean functions, \emph{Des. Codes Cryptogr.} \textbf{88} (2020), 1301-1329.

\bibitem{CC22} A. Chirvasitu and T. W. Cusick, Symbolic dynamics and rotation symmetric Boolean functions,  \emph{Cryptogr. Commun.}  14 (2022),
1091-1115.

\bibitem{CC24} A. Chirvasitu and T. W. Cusick, Quadratic rotation symmetric Boolean functions,  \emph{Discrete Appl. Math.}  343 (2024), 91-105.

\bibitem{InfSci11} T. W. Cusick, Affine equivalence of cubic homogeneous rotation symmetric functions, 
\emph{Inform. Sci.} \textbf{181} (2011), 5067-5083.

\bibitem{NTT} T. W. Cusick, Finding Hamming weights without looking at truth tables, \emph{Cryptogr. Commun.} \textbf{5} (2013), 7-18.

\bibitem{C18}   T. W. Cusick, Weight recursions for any rotation symmetric Boolean functions, \emph{IEEE Trans. Inform. Theory} \textbf{64} (2018), 2962-2968.

\bibitem{Carx} T. W. Cusick, Weight recursions for any rotation symmetric Boolean functions. arXiv:1701.06648 (2017).

\bibitem{ecc24}  T. W. Cusick, Using Easy Coefficients Conjecture for rotation symmetric Boolean functions,
               \emph{Inform. Sci.}  659 (2024), to appear   Online Nov 2023  

\bibitem{CS17} T. W. Cusick and P. St\u{a}nic\u{a}, \emph{Cryptographic Boolean Functions and Applications}, second ed. Academic Press, San Diego  (2017).

\bibitem{recur} G. Everest, A. I. Shparlinski  and T. Ward,  \emph{Recurrence Sequences}, Math. Surveys Monographs 104, American Mathematical Society, Providence (2003).

\bibitem{FF} E. Filiol and C. Fontaine, Highly nonlinear balanced Boolean functions with a good correlation immunity. In: \emph{Advances in cryptology--Eurocrypt '98},
LNCS vol. 1403, pp. 475-488. Springer, Berlin (1998).

\bibitem{IET} G. Gao, T. W. Cusick and W. Liu, Families of rotation symmetric functions with useful
	cryptographic properties, \emph{IET Inform. Sec.} 8 (2014), 297-302.


\bibitem{kavut3}  S. Kavut,  Results on rotation-symmetric S-boxes, 
\emph{Inform. Sci.} \textbf{201} (2012),   93-113. 

\bibitem{kavut4} S. Kavut and S. Balo\u{g}lu, Classification of $6\times 6$ S-boxes obtained by concatenation of RSSBs. In: \emph{Lightweight cryptography for security and privacy}, LNCS vol. 10098, pp. 110-127. Springer, Berlin (2017).

\bibitem{kavut5} S. Kavut and S. Balo\u{g}lu, Results on symmetric S-boxes constructed by concatenation of RSSBs. \emph{Cryptogr. Commun.} \textbf{11} (2019), 641-660.

\bibitem{kmy} S. Kavut, S. Maitra and M. D. Y\"ucel, Search for Boolean functions with excellent profiles in the rotation symmetric class, \emph{IEEE Trans. Inform. Theory} \textbf{53} (2007), 1743-1751.

\bibitem{kavut} S. Kavut  and M. D. Y\"ucel, Generalized Rotation Symmetric and Dihedral Symmetric Boolean Functions -- 9 variable Boolean Functions with Nonlinearity $242$. In:  \emph{Applied Algebra, Algebraic Algorithms and Error-Correcting Codes} (AAECC 2007), LNCS vol. 4851, pp. 321-329. Springer,  Berlin (2007).

\bibitem{kavut2} S. Kavut  and M. D. Y\"ucel, 9-variable Boolean Functions with Nonlinearity $242$ in the generalized rotation symmetric class, \emph{Inform. and Comput.}  \textbf{208} (2010),  341-350. 

\bibitem{Kim} H. Kim, S-M. Park and S. G. Hahn, On the weight and nonlinearity of homogeneous rotation symmetric Boolean functions of degree 2,  \emph{Discrete Appl. Math.} \textbf{157} (2009), 428-432.

\bibitem{LM}
D. Lind and B. Marcus, \emph{An Introduction to Symbolic Dynamics and Coding},  Cambridge University Press, Cambridge (1995).

\bibitem{Sloane}
F. J. MacWilliams and N. J. Sloane, \emph{The Theory of Error-Correcting Codes}, North Holland, Amsterdam (1977).


\bibitem{Zv}
S. Maitra,  Boolean functions on odd number of variables having nonlinearity greater than the bent concatenation bound. In: \emph{Boolean Functions in Cryptology and Information Security} (NATO ASI Zvenigorod, 2007), pp. 173-182. IOS Press, Amsterdam (2008).


\bibitem{PW}
N. J. Patterson and D. H. Wiedemann, The covering radius of the [$2^{15},~ 16$] Reed-Muller code is at least 16276, \emph{IEEE Trans.  Inform. Theory} \textbf{29} (1983), 354-356.

\bibitem{PQ} J. Pieprzyk and C. Qu, Fast hashing and rotation-symmetric functions, \emph{J. Univ. Comp. Sci.} \textbf{5} (1999), 20-31. 


\bibitem{SS} 
L. Sun, Z. Shi, J. Liu and F-W. Fu, Constructions of rotation symmetric Boolean functions satisfying almost all cryptographic criteria. \emph{Theoret. Comput. Sci.} 1020 (2024), Paper No. 114869, 12 pp.

\end{thebibliography}
\end{document}